\documentclass[10pt,conference]{IEEEtran}
\IEEEoverridecommandlockouts

\usepackage[T1]{fontenc}
\usepackage[dvipsnames]{xcolor}
\usepackage[normalem]{ulem} % for e.g. colored strikethrough
\usepackage{float}
\usepackage{inputenc}
\usepackage{amsthm}
\usepackage{thmtools}
\usepackage{mathtools}
\usepackage{physics}
\usepackage{afterpage}
\usepackage{bm} % bold math
\usepackage{bbm} % black board math
\usepackage{braket}
\usepackage{dsfont}
\usepackage[bb=stix]{mathalpha} % blackboard math
\usepackage{amsmath}
\usepackage{amssymb}
\usepackage{tikz}
\usetikzlibrary{arrows, arrows.meta, backgrounds, calc, fit, positioning, quantikz2, shapes, shapes.geometric}
\usepackage{tikz-cd}
\usepackage{ulem}
\usepackage[linesnumbered,ruled]{algorithm2e}
\SetArgSty{textnormal}

\usepackage{etoolbox}
\apptocmd{\thebibliography}{\raggedright}{}{}

        \DeclareMathOperator{\spn}{span} % span
        \DeclareMathOperator{\id}{id} % identity
        \newcommand*{\one}{\mathds{1}} % identity operator
        \newcommand*{\N}{\mathbb{N}}

        \newcommand*{\R}{\mathbb{R}}
        \newcommand*{\C}{\mathbb{C}}
        
        \newcommand\scriptin{\raisebox{0.15ex}{$\scriptscriptstyle\in$}} % "element of" symbol for usage in sub- and superscripts
        
        \newcommand*{\bigo}{\mathcal{O}} % Big-O notation
        \newcommand*{\lo}{\mathcal{L}} % linear bounded operators
        \newcommand*{\bit}{\mathbb{b}} % single bit
        \newcommand*{\qubit}{\mathbb{q}} % single-qubit system

        \newcommand*{\cop}{\text{{\normalfont\scshape cop}}} % generic combinatorial optimisation problem
        \newcommand*{\tsp}{\text{{\normalfont\scshape tsp}}} % Travelling Salesperson Problem
        \newcommand*{\controlled}{\texttt{C}} % control
        \newcommand*{\swp}{\texttt{SWAP}} % SWAP gate
        \newcommand*{\h}{\texttt{H}} % Hadamard gate
        \newcommand*{\x}{\texttt{X}} % Pauli-X gate
        \newcommand*{\rx}{\texttt{RX}} % rotational Pauli-X gate
\usepackage[colorlinks]{hyperref}
\hypersetup{
    colorlinks  = true,
    citecolor   = PineGreen,
    linkcolor   = MidnightBlue,
    urlcolor    = PineGreen
}

\declaretheorem[style=plain]{theorem}

\declaretheorem[style=plain,sibling=theorem]{corollary}
\declaretheorem[style=plain,sibling=theorem]{lemma}

\declaretheorem[style=definition,sibling=theorem]{definition}

\begin{document}

\title{Exhaustive and feasible parametrisation with applications to the travelling salesperson problem
\thanks{This work was supported by the DFG under Germany's Excellence Strategy--EXC-2123 QuantumFrontiers, the Quantum Valley Lower Saxony and the BMFTR projects ATIQ and Quics.}}

\author{\IEEEauthorblockN{Marvin Schwiering\IEEEauthorrefmark{1}, Timo Ziegler\IEEEauthorrefmark{1}, Lennart Binkowski\IEEEauthorrefmark{1}, Benjamin Sambale\IEEEauthorrefmark{2}}
\IEEEauthorblockA{\IEEEauthorrefmark{1}Institut f\"{u}r Theoretische Physik\\
Leibniz Universit\"{a}t Hannover, Hannover, Germany\\
Email: $\{$marvin.schwiering, timo.ziegler, lennart.binkowski$\}$@itp.uni-hannover.de}\\
\IEEEauthorrefmark{2}Institut f\"{u}r Algebra, Zahlentheorie und Diskrete Mathematik,\\
Leibniz Universit\"{a}t Hannover, Hannover, Germany\\
Email: sambale@math.uni-hannover.de}

\maketitle

\begin{abstract}
This paper introduces the concept of exhaustively parametrised, feasibility-respecting quantum circuits for constrained combinatorial optimisation problems.
Such circuits can reach, given the right parameter values, every feasible solution with certainty---including the optimum---with a fixed number of parameters, while avoiding infeasible solutions altogether.
This is in sharp contrast to conventional quantum alternating operator ansatz schemes, which are merely guaranteed to reach the optimum asymptotically.
We introduce an abstract pipeline for constructing exhaustively parametrised, feasibility-respecting circuits from a transitive group action on a problem's feasible set.
Our constructions rely on the simple combination of the group action with group representation and the novel notion of generating sequences: group elements in fixed order, possibly with repetitions, that generate the entire group.
That is, we trace expressivity of parametrised quantum circuits back to the most fundamental concepts of group theory.
We apply this pipeline to two concrete examples for the travelling salesperson problem, thus showing that exhaustively parametrised, feasibility-respecting circuits are not an empty definition.
Furthermore, we provide numerical proof-of-principles on instances with up to nine cities, comparing the suitability of our constructions for parameter optimisation purposes against established mixers.
\end{abstract}

\begin{IEEEkeywords}
combinatorial optimisation, hard constraints, QAOA, symmetric group, travelling salesperson problem
\end{IEEEkeywords}

\section{\label{section:Introduction}Introduction}

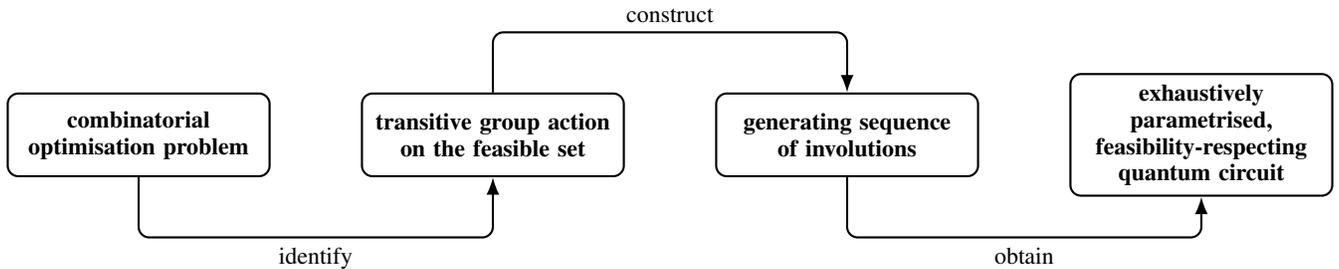
\begin{figure*}[!ht]
\begin{center}
\begin{tikzpicture}[
  >=Latex,
  node distance=10mm and 12mm,
  box/.style={
    draw,
    rounded corners,
    thick,
    align=center,
    minimum height=11mm,
    text width=32mm,
    inner sep=4pt
  },
  smallbox/.style={
    draw,
    rounded corners,
    thick,
    align=center,
    minimum height=10mm,
    text width=31mm,
    inner sep=4pt
  },
  line/.style={-Latex,thick},
  every node/.style={font=\small}
]

\node[box] (problem) {
  \textbf{combinatorial optimisation problem}
};

\node[box, right=of problem] (action) {%
  \textbf{transitive group action on the feasible set}
};

\node[box, right=of action] (sequence) {%
  \textbf{generating sequence of involutions}
};

\node[box, right= of sequence] (circuit) {%
  \textbf{exhaustively parametrised, feasibility-respecting quantum circuit}
};

\draw[line, rounded corners]
    (problem.south) -- ++(0,-0.8)
    coordinate (p1)
    -- node[midway, below] {identify}
       (p1 -| action.south)
    -- (action.south);
\draw[line, rounded corners]
    (action.north) -- ++(0,+0.8)
    coordinate (p2)
    -- node[midway, above] {construct}
       (p2 -| sequence.north)
    -- (sequence.north);
\draw[line, rounded corners]
    (sequence.south) -- ++(0,-0.8)
    coordinate (p3)
    -- node[midway, below] {obtain}
       (p3 -| circuit.south)
    -- (circuit.south);

\end{tikzpicture}
\caption{\label{figure:AbstractPipelineEasy}
    High-level overview of our abstract pipeline for constructing exhaustively parametrised, feasibility-respecting quantum circuits for combinatorial optimisation problems.
    The task of exactly and exclusively preparing all feasible solutions via paremtrised quantum circuits can be reduced to concepts of elementary group theory.
    The two key ingredients are a transitive group action on the problem's feasible set, often available from decades of operational research studying symmetries in the feasible set, and a generating sequence of involutions.
    The output of the pipeline is an exhaustively parametrised, feasibility-respecting quantum circuit with a fixed number of parameters.
}
\end{center}
\end{figure*}

The \emph{quantum approximate optimisation algorithm}~\cite{Farhi2014AQuantumApproximateOptimizationAlgorithm} and its generalisation to the \emph{quantum alternating operator ansatz} (QAOA)~\cite{Hadfield2019FromTheQuantumApproximateOptimizationAlgorithmToAQuantumAlternatingOperatorAnsatz} arguably constitute the most-studied class among (near-term) quantum algorithms for combinatorial optimisation.
In a nutshell, QAOA consists of optimising the expectation value of an objective Hamiltonian, encoding the classical objective function, over a parameterised class of quantum states.
This class is obtained by applying parameterised quantum circuits to a given initial state.
The parameter optimisation itself is often outsourced to a classical optimiser, while the quantum computer is mainly used as a sampling machine, providing fast estimations of the objective function (and its gradients).
Traditional classical algorithms for the QAOA parameter optimisation, such as COBYLA~\cite{Powell1998DirectSearchAlgorithmsForOptimizationCalculations} or SLSQP~\cite{Kraft1994Algorithm733TompFortranModulesForOptimalControlCalculations}, are usually (pseudo) gradient-based.
More recently, also gradient-free approaches based on Monte Carlo tree search have been investigated~\cite{Agirre2025AMonteCarloTreeSearchApproachToQaoaFindingANeedleInTheHaystack}. 
An increasingly popular alternative to direct parameter optimisation is to transfer well-performing parameters between similar instances~\cite{Wurtz2021FixedAngleConjecturesForTheQuantumApproximateOptimizationAlgorithmOnRegularMaxcutGraphs}.

Irrespective of the utilised parameter optimisation technique, QAOA circuits typically face reachability issues in the sense that the class of parametrised states does not (fully) cover any optimal solution state.
This is true for the original design for unconstrained and soft-constrained problems, but also often applies to refined versions which incorporate the problem's feasibility structure directly into the parametrised circuit.
While the original formulation and all versions derived from adiabatic protocols~\cite{Farhi2000QuantumComputationByAdiabaticEvolution} can at least ensure exact reachability in the (arguably practically irrelevant) limit of infinitely many variational parameters~\cite{Binkowski2024ElementaryProofOfQaoaConvergence}, even this property may vanish for generic QAOA circuits for hard-constrained problems.
In any case, this renders most QAOA circuits with a fixed number of parameters unreliable candidates for ever detecting the optimum of given optimisation problem.

Even though the heuristic nature of the QAOA might not require reliable sampling of optima in practice, we believe that the difficulty of finding good solutions with the QAOA should solely concentrate in the parameter optimisation, not in the coverage of the feasible space itself.
To this end, we introduce the concept of \emph{exhaustively parametrised} quantum circuits, which are guaranteed to reach every feasible solution state---including the optima---exactly and with a finite number of parameters.
Analogous to the generalisation of the QAOA for constrained problems, such circuits should only produce (superpositions of) feasible states, restricting the search space to the feasible subspace and thus avoiding overhead through feasibility checks and penalty terms.
Overall, our construction removes a representational obstruction.
It does not, by itself, imply efficient parameter optimisation or an end-to-end quantum speedup.

While the concept of exhaustively parametrised, feasibility-respecting quantum circuits might be straightforward, constructing one for a concrete problem at hand poses a serious engineering challenge, requiring profound knowledge about the problem's feasibility structure.
Our main contribution is the formalisation of an abstract pipeline, which essentially reduces the construction of such circuits to identifying a transitive group action on the feasible set.
Intuitively, a transitive group action is able to map between every pair of arbitrary feasible elements.
Representing the group (action) as multi-qubit operators is the conceptually easier part.
Capturing the entire group with fixed number of elements requires the novel notion of a generating sequence.

We apply our pipeline to the \emph{travelling salesperson problem} (TSP), one of the most paradigmatic NP-hard combinatorial optimisation problems with constraints.
Its feasible set can be naturally identified with the symmetric group over all city indices, suggesting to consider the action of the symmetric group on its own, i.e., on the TSP's feasible set, as entry point for our pipeline.
We derive two classes of exhaustively parametrised, feasibility-respecting circuits based on the same group action, but on different generating sequences.
The first example is inspired by the classical bubble sort algorithm.
The latter's ability to correctly sort an arbitrary input list can be reinterpreted as its building blocks---adjacency transpositions---constituting a generating sequence for the symmetric group.
The bubble sort sequence for an $n$-city instance has $\bigo(n^{2})$ elements and yields a circuit with the same number of parameters.
The second design is based on a recursive construction of generating sequences, which only adds $\log(n)$ elements when increasing $n - 1 \mapsto n$.
Applying this construction $n - 1$ times in a row yields a generating sequence with only $n \log(n)$ elements.

The TSP has been frequently in the scope of the QAOA and related methods, allowing a direct comparison between our approach and established ones.
Concretely, the seminal work~\cite{Hadfield2019FromTheQuantumApproximateOptimizationAlgorithmToAQuantumAlternatingOperatorAnsatz} enhancing the QAOA for constrained problems, also features concrete feasibility-preserving quantum circuits for the TSP.
We utilise them as baseline for numerical experiments on TSP instances with up to nine cities.
Despite increased performance of our approaches on the selected test instance, we refrain from predicting any immediate practical implications for solving realistic TSP instances.
On the one hand, the unpredictably complex parameter landscapes permit any meaningful extrapolations, neither to instances of a similar size, let alone to realistic instances.
On the other, instances with thousands of variables have been classically solved to provable optimality~\cite{Applegate2009CertificationOfAnOptimalTspTourThrough85900Cities}---a regime simply not matched by near-term quantum computing and tailored quantum algorithms.
Instead, the TSP constructions and their numerical study are to be understood as proof-of-principles of our abstract pipeline, whose impact will grow the better we understand the feasibility structure of other, more challenging optimisation problems.

\section{\label{section:PreviousWork}Previous work}

The TSP is arguably one of the most prominent combinatorial optimisation problems and motivated the development of several, more generally applicable classical frameworks.
For example, in their seminal paper~\cite{Dantzig1954SolutionOfALargeScaleTravelingSalesmanProblem}, Dantzig, Fulkerson, and Johnson already utilised cutting planes~\cite{Marchand2002CuttingPlanesInIntegerAndMixedIntegerProgramming} and branch-and-bound~\cite{Land1960AnAutomaticMethodOfSolvingDiscreteProgrammingProblems} methods.
Bellman~\cite{Bellman1962DynamicProgrammingTreatmentOfTheTravellingSalesmanProblem} developed the concept of dynamic programming and studied its application to the TSP.
Also the Held-Karp algorithm~\cite{Held1962ADynamicProgrammingApproachToSequencingProblems}, which enjoys the best asymptotic worst-case runtime among all known classical exact algorithms for the TSP, is based on dynamic programming.

In the field of quantum(-assisted) optimisation, the original formulation of the QAOA by Farhi \textit{et al.}~\cite{Farhi2014AQuantumApproximateOptimizationAlgorithm} is arguably the best-studied algorithm.
It is universally applicable to combinatorial optimisation.
However, constrained problems, such as the TSP, require a reformulation as (usually quadratic) unconstrained binary optimisation problems (QUBOs)~\cite{Kochenberger2014TheUnconstrainedBinaryQuadraticProgrammingProblemASurvey}.
This reformulation technique predates the field of quantum optimisation---for example, the QUBO formulation of the TSP can be traced back to early works of Hopfield and Tank~\cite{Hopfield1985NeuralComputationOfDecisionsInOptimizationProblems}---but regained massive popularity due to the QAOA.
Most notably, Lucas~\cite{Lucas2014IsingFormulationsOfManyNpProblems} published a comprehensive list of QUBO formulations for various combinatorial optimisation problems, including Hopfield and Tank's formulation that was previously rediscovered by Warren~\cite{Warren2012AdaptingTheTravelingSalesmanProblemToAnAdiabaticQuantumComputer}.
More space-efficient encoding, resulting in higher-order binary optimisation (HOBO) formulations, have been discussed as well~\cite{Glos2020SpaceEfficientBinaryOptimizationForVariationalComputing}.
Especially the QUBO formulation of the TSP has recently come under increasing criticism due to well--documented shortcomings in QUBO-based classical methods~\cite{SmithMiles2025TheTravellingSalespersonProblemAndTheChallengesOfNearTermQuantumAdvantage}.

As a promising alternative for constrained problems, Hadfield \textit{et al.}~\cite{Hadfield2019FromTheQuantumApproximateOptimizationAlgorithmToAQuantumAlternatingOperatorAnsatz} enhanced the QAOA to directly incorporate constraints in a problem-specific mixer.
They designed concrete mixers for several problem classes, including the TSP, while Ruan \textit{et al.}~\cite{Ruan2020TheQuantumApproximateAlgorithmForSolvingTravelingSalesmanProblem} considered pre-compiling suitable QAOA mixers for TSP.
Alternative mixers, derived via similar group-theoretical considerations as in this work, have been proposed for the open-shop scheduling problem~\cite{Binkowski2025SymmetryBasedQuantumAlgorithmsForOpenShopSchedulingWithHardConstraints}, containing the TSP as special case.
The group-theoretical treatment of constraints complements the analysis in~\cite{Shaydulin2021ClassicalSymmetriesAndTheQuantumApproximateOptimizationAlgorithm}, studying the effect of symmetries in the objective function on the QAOA dynamics.

Besides variational methods, there exist other hybrid and quantum approaches to the TSP.
Most notably, the application of the closely related framework of quantum annealing has been proposed~\cite{Martonak2004QuantumAnnealingOfTheTravelingSalesmanProblem}, extensively studied~\cite{Warren2019SolvingTheTravelingSalesmanProblemOnAQuantumAnnealer}, and fine-tuned for the TSP~\cite{Chen2006OptimizedAnnealingOfTravelingSalesmanProblemFromTheNthNearestNeighborDistribution}.
Another interesting proposal encodes city coordinates directly on the Bloch sphere of a single qubit~\cite{Goswami2025SolvingTheTravellingSalesmanProblemUsingBlochSphereEncoding}.
Furthermore, Grover's famous search algorithm~\cite{Grover1996AFastQuantumMechanicalAlgorithmForDatabaseSearch} has been adapted to the TSP, both, for feasible state preparation and subsequent search~\cite{Sato2025TwoStepQuantumSearchAlgorithmForSolvingTravelingSalesmanProblems}, based on a generalised Grover oracle~\cite{Bang2012AQuantumHeuristicAlgorithmForTheTravelingSalesmanProblem}.
Bärtschi and Eidenbenz~\cite{Bartschi2020GroverMixersForQaoaShiftingComplexityFromMixerDesignToStatePreparation}, alongside their proposal of the Grover-mixer QAOA, provided an efficient preparation routine for the uniform superposition of all TSP states, rendering Grover's algorithm obsolete for this task.
Their construction can, despite the claims in~\cite{Sato2025TwoStepQuantumSearchAlgorithmForSolvingTravelingSalesmanProblems}, be extended to the more compact HOBO encoding \cite[Corollary 3]{Binkowski2025QuantumFisherYatesShuffleUnifyingMethodsForGeneratingUniformSuperpositionsOfPermutations}.
More recently, heuristic, non-uniform state preparation routines for the TSP have been also discussed~\cite{Hess2026GroverAdaptiveSearchWithProblemSpecificStatePreparation}.

\section{\label{section:Preliminaries}Preliminaries}

Throughout this article, we use the notational short-hands $\log(n) \coloneqq \lceil\log_{2}(n)\rceil$, as well as $[n] \coloneqq \{1, \ldots, n\}$ for $n \in \N$, $\bit \coloneqq \{0, 1\}$ for the bit, and $\qubit \coloneqq \C^{2}$ for the qubit.
For $\bm{x} \in \bit^{n}$, $\abs{\bm{x}}$ denotes its Hamming weight, i.e., the number of bits of $\bm{x}$ set to $1$.

\subsection{\label{subsection:TSP}Travelling salesperson problem}

Abstractly, an instance of the \emph{Travelling Salesperson Problem} (\tsp{}) on $n$ cities is defined on a complete weighted digraph $K_{n} = (V, E, w)$ (without self-loops) with $\abs{V} = n$ and edge weights $w \colon E \to \R_{> 0}$.
Feasible solutions correspond to \emph{Hamiltonian cycles}, i.e., cycles that visit every vertex exactly once; $K_{n}$ has $(n - 1)!$ such cycles.
The goal is to find a Hamiltonian cycle of minimum total edge weight.

Instead of the linear formulation used by Dantzig, Fulkerson, and Johnson~\cite{Dantzig1954SolutionOfALargeScaleTravelingSalesmanProblem}, let us consider an alternative encoding which allows for a cleaner transfer to the quantum-mechanical treatment.
It was popularised by Lucas~\cite{Lucas2014IsingFormulationsOfManyNpProblems} for the Ising/QUBO formulation of the \tsp{} and utilises $n^{2}$ bits, each indexed by the city $u \in [n]$ and the time $t \in [n]$.
Then, $x_{u, t} = 1$ means that city $u$ is visited at time $t$.
Accordingly, if $x_{u, t} = x_{v, t + 1} = 1$ (or $x_{u, n} = x_{v, 1} = 1$), the tour traverses the edge $(u, v) \in E$, giving rise to a quadratic objective function, mapping $n^{2}$-bit strings $\bm{x}$ to
\begin{align*}
    \sum_{u = 1}^{n} \sum_{v = 1}^{n} w(u, v) \left(x_{u, n} x_{v, 1} + \sum_{t = 1}^{n - 1} x_{u, t} x_{v, t + 1}\right),
\end{align*}
summing up the weights of all traversed edges.
Note that not all bit strings $\bm{x} \in \bit^{n^{2}}$ are valid tours.
We have to enforce explicitly that each city is visited exactly once and that this happens at distinct time slots:
\begin{align*}
    \sum_{t = 1}^{n} x_{u, t} = 1\ \forall\, u \in [n] \text{ and } \sum_{u = 1}^{n} x_{u, t} = 1\ \forall\, t \in [n].
\end{align*}
Interpreting these $n^{2}$-bit strings as $n \times n$ Boolean matrices, the feasible solutions exactly correspond to permutation matrices, i.e., matrices with exactly one $1$ in each row and each column.
Especially, there are $n!$ feasible assignments in this encoding so that a single Hamiltonian cycle is necessarily represented by more than just one bit string.

This approach effectively consists of two one-hot encodings for city and time slot.
It can be straightforwardly compactified by encoding one of the indices in binary instead, reducing the bit count to $n \log(n)$~\cite{Glos2020SpaceEfficientBinaryOptimizationForVariationalComputing}.
We choose here to keep the one-hot time index and to assign visited cities in $\log(n)$-subregisters.
For now, this complicates the evaluation of the objective function, but will later result in fewer operations to traverse the feasible solutions.
Instead of checking two bits $x_{u, t}$ and $x_{v, t + 1}$, we have to check which values the two $\log(n)$-bit substrings $\bm{x}_{t}$ and $\bm{x}_{t}$ hold.
Similarly, for every $u \in [n]$, the first constraint has to evaluate all subregisters $\bm{x}_{t}$, $t \in [n]$, and check that exactly one of them holds the value of $u$ in binary.
Equivalently, feasible states are those where the cities encoded in the time-subregisters form a valid permutation of~$[n]$.
The second constraint becomes redundant in this encoding, as every time $t$, by construction, carries a visited city in its associated subregister $\bm{x}_{t}$.
We already remark that the complicated objective function will not have any impact on our circuit constructions.

Both variants introduce a $n$-fold degeneracy for encoding Hamiltonian cycles as bit strings, stemming from the fact that a time coordinate introduces an artificial ``starting city'', i.e., the city visited at time $t = 1$.
Choosing any of the $n$ cities along a Hamiltonian cycle as starting city does not alter the cycle itself, but corresponds to different bit strings.
We can remove this degeneracy by removing one city---say the $n$-th city---and one time slot from the variable space, and by implicitly connecting the cities visited at times $t \in [n - 1]$ with this city.
In the twofold one-hot encoding, the resulting objective function $\tilde{c}$ maps $(n - 1)^{2}$-bit strings $\bm{x}$ to
\begin{align*}
    \sum_{u = 1}^{n - 1}\hspace*{-1pt} w(n, u) x_{u, 1}\hspace*{-1pt} +\hspace*{-1pt} w(u, n) x_{u, n - 1}\hspace*{-2pt} +\hspace*{-3pt} \sum_{v = 1}^{n - 1} \sum_{t = 1}^{n - 2}\hspace*{-1pt} w(u, v) x_{u, t} x_{v, t + 1}.
\end{align*}
The two constraints stay the same, except that they only run over $[n - 1]$ in both indices.
These changes to objective function and constraints are entirely analogous in the compactified encoding.

\subsection{\label{subsection:ElementaryGroupTheoryAndTheSymmetricGroup}Elementary group theory and the symmetric group}

The encoded $n$-city \tsp{} can be understood as an optimisation over the symmetric group $S_{n}$ of all permutations of $n$ elements, or $n - 1$ elements if one starting city is fixed.
That is, each feasible solution $\bm{x} \in \bit^{n^{2}}$---resp.\ $\bit^{n \log(n)}$---can be uniquely identified with an elements $\sigma \in S_{n}$.
The most natural such identification is given via the representation of $S_{n}$ on $\mathbb{F}_{2}^{n}$ as permutation matrices which, when flattened to an array, exactly correspond to $n^{2}$-bit strings, as discussed before.
For the $n \log(n)$-encoding, a similarly natural connection can be established:
identify a feasible $n \log(n)$-bit string $\bm{x}$ with $\sigma \in S_{n}$ with $\text{bin}(\sigma(i)) = \bm{x}_{i}$ for all $i \in [n]$.
Special elements of $S_{n}$ which will become important later are the transpositions $\tau_{i, j}^{\hphantom{2}} \in S_{n}$ with $i, j \in [n]$ and $i \neq j$, exchanging the $i$-th and $j$-th element.
They have the useful property of being involutions, i.e., it holds that $\tau_{i, j}^{2} = \id$.

Another important, but more abstract concept is that of a group action.
The elements of a group $G$ can act on a set $X$, mapping elements from $X$ again to $X$.
One imposes that this action respects the group structure of $G$ in the sense that for all $x \in X$, it holds that $e_{G} \cdot x = x$, where $e_{G}$ is $G$'s neutral element, and for all $g, h \in G$, it is $g \cdot (h \cdot x) = (gh) \cdot x$.
A group can also act on itself via (left or right) multiplication, i.e., simply via $g \cdot h = g h$ for $g, h \in G$.
This self-action has the important property of being (\emph{sharp}) \emph{transitive}: for every $h_{1}, h_{2} \in G$, there exists (exactly) one $g \in G$ so that $g \cdot h_{1} = h_{2}$.

The last important concept is that of the permutation representation of a group which is not necessarily $S_{n}$.
Given a group~$G$ acting on a finite set $X$, one can consider the formal $\C$-span of $X$, $\langle X\rangle_{\C}$, consisting of formal $\C$-linear combinations of elements in $X$.
Then, $\langle X\rangle_{\C}$ is a $\abs{X}$-dimensional complex vector space with a fixed reference basis provided by the elements of $X$.
The action of $G$ on $X$ can be naturally extended to an action on $\langle X\rangle_{\C}$ simply via linear extension, i.e.,
\begin{align*}
    g \cdot \sum_{x \scriptin X} \alpha_{x} x \coloneqq \sum_{x \scriptin X} \alpha_{x} g \cdot x.
\end{align*}
By construction, this action respects the space's linear structure and therefore represents the elements of $G$ as linear operators on $\langle X\rangle_{\C}$, which are, expressed in the natural reference basis, simple permutation matrices.

\subsection{\label{subsection:QuantumCombinatorialOptimisation}Quantum combinatorial optimisation}

The standard way of mapping combinatorial optimisation problems with objective function $c : \bit^{m} \rightarrow \R$ to a (qubit-based) quantum computer is to simply replace every bit with a qubit and to introduce the diagonal objective Hamiltonian $C \in \lo(\qubit^{\otimes m})$ via $C \ket{\bm{x}} = c(\bm{x}) \ket{\bm{x}}$.
Then, by construction, ground states of $C$ exactly correspond to (superpositions of) minimisers of $c$.
However, if the problem is constrained, such as the \tsp{}, the ground states of $C$ need not correspond to feasible solutions.

The arguably most prominent approaches for tackling combinatorial ground state search are Farhi \textit{et al.}'s Quantum Approximate Optimisation Algorithm~\cite{Farhi2014AQuantumApproximateOptimizationAlgorithm} and Hadfield \textit{et al.}'s generalisation to the Quantum Alternating Operator Ansatz~\cite{Hadfield2019FromTheQuantumApproximateOptimizationAlgorithmToAQuantumAlternatingOperatorAnsatz} (QAOA).
In a nutshell, an instantiation of QAOA prescribes an initial state $\ket{\iota}$, a depth $p \in \N$, and two parametrised quantum circuits: the \emph{phase separator} $U_{\text{P}}(\gamma)$ and the \emph{mixer} $U_{\text{M}}(\beta)$.
The (approximate) ground state search is then restricted to the parametrised states of the form
\begin{align*}
    \ket{\bm{\beta}, \bm{\gamma}} \coloneqq U_{\text{M}}(\beta_{p}) U_{\text{P}}(\gamma_{p}) \cdots U_{\text{M}}(\beta_{1}) U_{\text{P}}(\gamma_{1}) \ket{\iota}
\end{align*}
and can thus be translated to a minimisation task over the $2 p$ parameters $\bm{\beta} = (\beta_{1}, \ldots, \beta_{p})$ and $\bm{\gamma} = (\gamma_{1}, \ldots, \gamma_{p})$.

Usually, the phase separator is of the form $U_{\text{P}}(\gamma) = e^{-i \gamma C}$ and is therefore itself diagonal in the computational basis $\{\ket{\bm{x}} \in \qubit^{\otimes m} \colon \bm{x} \in \bit^{m}\}$.
For unconstrained or soft-constrained problems\footnote{A soft constraint enters as penalty term within the objective function.
Assignments violating a soft constraint are still considered feasible, but---depending on the penalty's magnitude---might have worse objective values than strictly feasible solutions in the narrower sense.}, a common choice for the initial state and the mixer is the combination of uniform superposition of all computational basis states $\ket{\iota} = \ket{\bm{+}}$ and $U_{\text{M}}(\beta) = e^{-i \beta B}$, where $B$ is the $m$-fold sum over all single-qubit Pauli-\texttt{X} operators.
$\ket{\bm{+}}$ is the unique ground state of $- B$.
For (hard-)constrained problems with feasible set $F \subset \bit^{m}$, Hadfield \textit{et al.} require the mixer to leave the \emph{feasible subspace} $\mathcal{F} \coloneqq \spn\{\ket{\bm{x}} \colon \bm{x} \in F\}$ invariant and to provide transitions between all pairs of feasible states.
The latter condition is formalised as
\begin{align}
    \forall\, \bm{x}, \bm{y} \in F\ \exists\, r \in \N\ \exists\, \beta \in \R \colon \expval{\bm{x} | U_{\text{M}}^{r}(\beta) | \bm{y}} \neq 0.
\end{align}
Note that since the usual phase separator is diagonal in the computational basis, it trivially leaves the feasible subspace invariant.
Therefore, choosing a feasible initial state $\ket{\iota} \in \mathcal{F}$ effectively restricts all parametrised states $\ket{\bm{\beta}, \bm{\gamma}}$ to the feasible subspace.

Note that, in the language of \autoref{subsection:ElementaryGroupTheoryAndTheSymmetricGroup}, $\qubit^{\otimes m}$ is nothing but the formal $\C$-span of $\bit^{m}$ with its computational basis as natural reference basis.
Analogously, the feasible subspace $\mathcal{F}$ may be viewed as the formal $\C$-span of the feasible subset $F$ with the feasible computational basis states as reference basis elements.

We conclude the preliminaries with (a simplified version of) Hadfield \textit{et al.}'s mixer construction for the \tsp{}.
They consider the $n^{2}$-encoding of the \tsp{}, but their construction can be readily adapted to the more compact $n \log(n)$-encoding.
First, for an unordered pair of times $\{s, t\}$ and an unordered pair of cities $\{u, v\}$, define the swap partial Hamiltonian
\begin{align*}
    H_{\text{PS}, \{s, t\}, \{u, v\}} \coloneqq S_{u, s}^{+} S_{v, t}^{+} S_{u, t}^{-} S_{v, s}^{-} + S_{u, s}^{-} S_{v, t}^{-} S_{u, t}^{+} S_{v, s}^{+}
\end{align*}
with $S^{+} = \ketbra{1}{0}$ and $S^{-} = \ketbra{0}{1}$.
This Hamiltonian clearly swaps cities $u$ and $v$ between time slots $s$ and $t$.
Hence, $H_{\text{PS}, t, \{u, v\}} \coloneqq H_{\text{PS}, \{t, t + 1\}, \{u, v\}}$ swaps cities $u$ and $v$ between subsequent time slots $t$ and $t + 1$.
By summing over all unordered city pairs $\{u, v\}$, we obtain $H_{\text{PS}, t}$, swapping whatever cities are visited on subsequent time slots $t$ and $t + 1$.
The sequential ordering swap mixer is now defined as
\begin{align*}
    U_{\text{seq-PS}}(\beta) \coloneqq \prod_{t = 1}^{n} e^{-i \beta H_{\text{PS}, t}}.
\end{align*}
Hadfield \textit{et al.} also consider the simultaneous version of $U_{\text{seq-PS}}$ with an exponentiated sum of swap operators instead of a product of individually exponentiated swaps.
Since $H_{\text{PS}, s}$ and $H_{\text{PS}, t}$ do not commute if $\abs{t - s} = 1$, these two notions are not equivalent.

\section{\label{section:Methods}Methods}

\begin{figure*}
\begin{center}
\begin{tikzpicture}[
  >=Latex,
  node distance=10mm and 12mm,
  box/.style={
    draw,
    rounded corners,
    thick,
    align=center,
    minimum height=11mm,
    text width=32mm,
    inner sep=4pt
  },
  smallbox/.style={
    draw,
    rounded corners,
    thick,
    align=center,
    minimum height=10mm,
    text width=31mm,
    inner sep=4pt
  },
  line/.style={-Latex,thick},
  every node/.style={font=\small}
]

% Main pipeline
\node[box] (action) {%
  \textbf{transitive group action}\\[1mm]
  $G \curvearrowright F \subset \bit^{m}$
};

\node[box, right=of action] (perm) {%
  \textbf{representation}\\[1mm]
  $h \mapsto \hat{h} \in \lo(\mathcal{F})$
};

\node[box, right=of perm] (extend) {%
  \textbf{canonical extension}\\[1mm]
  $\hat{h} \mapsto H \in \lo(\qubit^{\otimes m})$
};

\node[box, right= of extend] (prep) {%
  \textbf{state preparation}\\[1mm]
  $W \ket{\bm{0}} \propto \ket{\bm{x}},\, \bm{x} \in F$
};

\draw[line] (action) -- (perm);
\draw[line] (perm) -- (extend);

% Generating sequence branch
\node[box, below=18mm of action] (gens) {%
  \textbf{generating sequence of involutions}\\[1mm]
  $(h_{i})_{i \scriptin [d]} \in G^{d}$
};

\node[box, right=of gens] (reps) {%
  \textbf{sequence elements as multi-qubit operators}\\[1mm]
  $H_{d}, \ldots, H_{1}$
};

\node[box, right=of reps] (params) {%
  \textbf{parametrised exponentials}\\[1mm]
  $e^{-i \theta_{d} H_{d}}, \ldots, e^{-i \theta_{1} H_{1}}$
};

\draw[line] (gens) -- (reps);
\draw[line] (reps) -- (params);
\draw[line] (extend.south) -- (reps.north);

\node[box, right=of params] (circuit) {%
  \textbf{exhaustively parametrised,}\\
  \textbf{feasibility-respecting quantum circuit}\\[1mm]
  $e^{-i \theta_{d} H_{d}} \cdots e^{-i \theta_{1} H_{1}} W$
};

\draw[line] (params) -- (circuit);
\draw[line] (prep) -- (circuit);

\end{tikzpicture}
\caption{\label{figure:AbstractPipeline}
    Abstract pipeline for constructing an exhaustively parametrised, feasibility-respecting quantum circuit from a transitive group action on the feasible set $F \subset \bit^{m}$.
    Central to this construction is the existence of a generating sequence of involutions, whose elements can be represented as $m$-qubit operators.
    Exponentiating these operators, introducing real parameters in the exponent, and prepending the state preparation of some feasible computational basis state $\ket{\bm{x}}$, completes the construction.
}
\end{center}
\end{figure*}
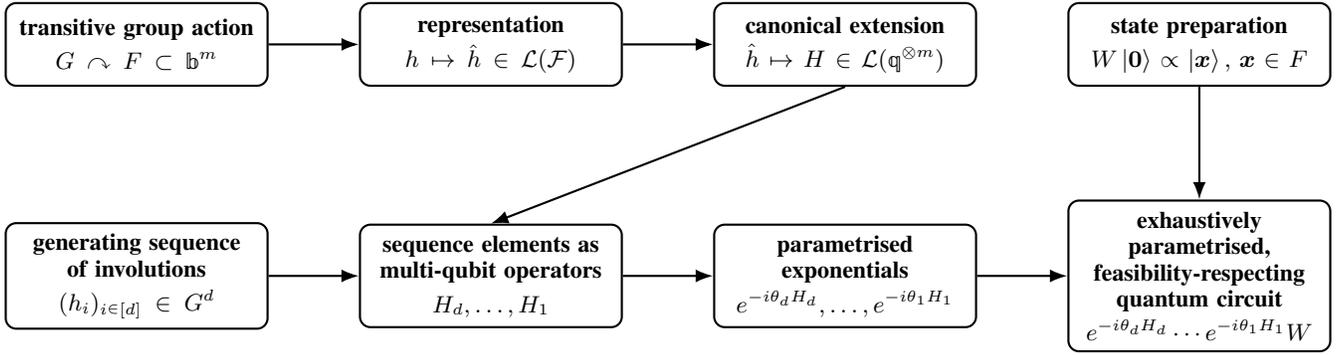

As a consequence of the adiabatic theorem, the original QAOA for unconstrained problems is guaranteed to reach the optimum in the limit of $p \to \infty$~\cite{Farhi2014AQuantumApproximateOptimizationAlgorithm}.
That is, for every $\varepsilon > 0$, there exists a $p \in \N$ and a parameter vector $(\bm{\beta}, \bm{\gamma}) \in \R^{2 p}$ so that $\abs{\expval{\bm{\beta}, \bm{\gamma} | \bm{x}^{*}}} \geq 1 - \varepsilon$, where $\bm{x}^{*} \in \bit^{n}$ is the problem's optimal solution.
This even holds true when there are multiple optimal solutions~\cite{Binkowski2024ElementaryProofOfQaoaConvergence}.
This asymptotic reachability guarantee can be extended to the QAOA for constrained problems by demanding stronger mixing properties~\cite{Binkowski2025SymmetryBasedQuantumAlgorithmsForOpenShopSchedulingWithHardConstraints}.
However, these guarantees are of little practicality;
neither can current quantum devices execute such deep circuits reliably, nor would their execution result in practical runtime advantages.
As a recourse, we formulate here the concept of exhaustively parametrised quantum circuits with non-asymptotic reachability guarantees.

\begin{definition}\label{definition:ExhaustivelyParametrisedCircuit}
    \cite[Definition 5.9]{Binkowski2026QuantumAlgorithmsForCombinatorialOptimisation}
    Let \cop{} be a combinatorial optimisation problem with feasible subset $F \subset \bit^{m}$.
    A parametrised quantum circuit $V(\bm{\theta})$, $ \bm{\theta} \in \R^{d}$, is called \emph{exhaustively parametrised} if for every $\bm{x} \in F$ there exist parameter values $\bm{\theta}_{\bm{x}}$ so that $\abs{\expval{\bm{x} | V(\bm{\theta}_{\bm{x}}) | \bm{0}}} = 1$.
\end{definition}

Especially, every optimal solution $\ket{\bm{x}^{*}}$ can be exactly reached by applying $V(\bm{\theta}_{\bm{x}^{*}})$ once to $\ket{\bm{0}}$, which in sharp contrast to the usual asymptotic guarantee.
The choice of $\ket{\bm{0}}$ as universal initial state is not restrictive, as any state preparation unitary can be absorbed into $V(\bm{\theta})$.

\autoref{definition:ExhaustivelyParametrisedCircuit} deliberately does not restrict the parametrised states to the feasible subspace.
We introduce this as a separate definition, although all examples constructed in this work conform to both requirements.

\begin{definition}\label{definition:FeasibilityRespectingCircuit}
    \cite[Definition 5.9]{Binkowski2026QuantumAlgorithmsForCombinatorialOptimisation}
    Let \cop{} be a combinatorial optimisation problem with feasible subset $F \subset \bit^{m}$.
    A parametrised quantum circuit $V(\bm{\theta})$, $ \bm{\theta} \in \R^{d}$, is called \emph{feasibility-respecting} if for every $\bm{x} \in \bit^{m} \setminus F$ and every $\bm{\theta} \in \R^{d}$ it holds that $\expval{\bm{x} | V(\bm{\theta}) | \bm{0}} = 0$.
\end{definition}

Next, we formulate a central ingredient for constructing exhaustively parametrised circuits for problems with group-based constraints, such as the \tsp{}.

\begin{definition}
    \cite[Definition 5.1]{Schwiering2024QuantumOptimizationAlgorithmsForTheTravelingSalesmanProblem}
    Let $G$ be a group and $d \in \N$.
    A sequence $(h_{i})_{i \scriptin [d]} \in G^{d}$ is a \emph{generating sequence} (\emph{for} $G$) if for every $g \in G$, there exists $\bm{b} \in \bit^{d}$ so that $g = h_{d}^{b_{d}} \cdots h_{2}^{b_{2}} h_{1}^{b_{1}}$.
\end{definition}

That is, unlike in the usual notion of group generators, elements within a generating sequence have to be applied in a fixed order and occur with a fixed multiplicity.
We give concrete examples for generating sequences for the symmetric group $S_{n}$ in the \autoref{section:Results}.

For now, consider a generic problem \cop{} with feasible set $F \subset \bit^{m}$ and a group $G$ acting on $F$.
The bit-to-qubit mapping for \cop{} naturally yields the formal $\C$-spans of $\bit^{m}$ ($\qubit^{\otimes m}$) and $F$ ($\mathcal{F}$) and hence a permutation representation of $G$ on $\mathcal{F}$.
As permutation matrices, they are especially unitary and can extended to unitaries on $\qubit^{\otimes m}$, i.e., to valid $m$-qubit quantum gates.
For us, it will not be important how the extended representations act outside of the feasible subspace, so that we may pick whatever extensions is easiest to implement.
Every such unitary extensions readily leaves the feasible subspace invariant.

As a final ingredient, we require a simple, well-known identity for involutory operators.
Let $A \in \lo(\qubit^{\otimes m})$ be a (not necessarily unitary) involution, i.e., it is $A^{2} = \one$.
Then, it holds for every $\theta \in \R$ that
\begin{align}
    e^{-i \theta A} &= \sum_{j = 0}^{\infty} \tfrac{(-i)^{2 j} \theta^{2 j} A^{2 j}}{(2 j)!} + \sum_{j = 0}^{\infty} \tfrac{(-i)^{2 j + 1} \theta^{2 j + 1} A^{2 j + 1}}{(2 j + 1)!} \nonumber \\
    &= \sum_{j = 0}^{\infty} \tfrac{(-1)^{j} \theta^{2 j}}{(2 j)!} -i \sum_{j = 0}^{\infty} \tfrac{(-1)^{j} \theta^{2 j + 1}}{(2 j + 1)!} A \nonumber \\
    &= \cos(\theta) - i \sin(\theta) A. \label{equation:ExponentialOfInvolution}
\end{align}

We are now in the position to state and prove our main abstract result.

\begin{theorem}\label{theorem:MainResult}
    Let \cop{} be a combinatorial optimisation problem with feasible subset $F \subset \bit^{m}$, let $G$ be a group acting transitively on $F$, let $(h_{i})_{i \scriptin [d]} \in G^{d}$ be a generating sequence for $G$ of involutions, and, for every $i \in [d]$, let $H_{i} \in \lo(\qubit^{\otimes m})$ be the involutory unitary so that $H_{i}\vert_{\mathcal{F}}$ is the permutation representation of $h_{i}$.
    Then, for every unitary $W \in \lo(\qubit^{\otimes m})$ with $W \ket{\bm{0}} \propto \ket{\bm{x}}$ for some $\bm{x} \in F$, the unitary
    \begin{align*}
        V(\bm{\theta}) \coloneqq e^{-i \theta_{d} H_{d}} \cdots e^{-i \theta_{2} H_{2}} e^{-i \theta_{1} H_{1}} W
    \end{align*}
    is an exhaustively parametrised, feasibility-respecting quantum circuit for \cop{}.
\end{theorem}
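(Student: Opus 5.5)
We will prove the two properties separately, both via the identity \eqref{equation:ExponentialOfInvolution}: since each $H_{i}$ is an involution, $e^{-i \theta_{i} H_{i}} = \cos(\theta_{i}) \one - i \sin(\theta_{i}) H_{i}$.

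\emph{Feasibility.} Each $H_{i}$ restricts on $\mathcal{F}$ to a permutation matrix of the reference basis $\{\ket{\bm{y}} \colon \bm{y} \in F\}$, so $H_{i} \mathcal{F} \subseteq \mathcal{F}$. By the identity above, $e^{-i \theta_{i} H_{i}}$ is a linear combination of $\one$ and $H_{i}$ and therefore also leaves $\mathcal{F}$ invariant, for every $\theta_{i} \in \R$. Since $W \ket{\bm{0}} \propto \ket{\bm{x}} \in \mathcal{F}$, induction over $i = 1, \ldots, d$ shows $V(\bm{\theta}) \ket{\bm{0}} \in \mathcal{F}$. Consequently, $\expval{\bm{y} | V(\bm{\theta}) | \bm{0}} = 0$ for all $\bm{y} \notin F$, because $\ket{\bm{y}}$ is orthogonal to $\mathcal{F}$.

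\emph{Exhaustiveness.} Fix a target $\bm{y} \in F$. Transitivity of the action gives some $g \in G$ with $g \cdot \bm{x} = \bm{y}$, and the generating-sequence property gives $\bm{b} \in \bit^{d}$ with $g = h_{d}^{b_{d}} \cdots h_{1}^{b_{1}}$. Choose $\theta_{i} = \pi/2$ if $b_{i} = 1$ and $\theta_{i} = 0$ if $b_{i} = 0$. The identity then yields $e^{-i \theta_{i} H_{i}} = (-i)^{b_{i}} H_{i}^{b_{i}}$, so
\begin{align*}
    V(\bm{\theta}) \ket{\bm{0}} = (-i)^{\abs{\bm{b}}} \lambda\, H_{d}^{b_{d}} \cdots H_{1}^{b_{1}} \ket{\bm{x}},
\end{align*}
where $W \ket{\bm{0}} = \lambda \ket{\bm{x}}$ with $\abs{\lambda} = 1$, since $W$ is unitary. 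On $\mathcal{F}$ the $H_{i}$ coincide with the permutation representation, and a representation is multiplicative. Hence the product of operators acts on $\ket{\bm{x}}$ as the representation of $g$, giving $\ket{g \cdot \bm{x}} = \ket{\bm{y}}$. Therefore $\abs{\expval{\bm{y} | V(\bm{\theta}) | \bm{0}}} = 1$.

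The main point of care is this last step: the operators must be composed in the order matching the generating sequence. Right-to-left application of $H_{1}$ first corresponds to $h_{d}^{b_{d}} \cdots h_{1}^{b_{1}}$, consistent with the axiom $g \cdot (h \cdot x) = (gh) \cdot x$. Each intermediate state remains a feasible basis vector, so only the restrictions $H_{i}\vert_{\mathcal{F}}$ ever enter. Everything else is routine.
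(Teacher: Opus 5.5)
Your proposal is correct and follows essentially the same route as the paper's proof. Feasibility comes from $e^{-i\theta H_{i}} = \cos(\theta)\one - i\sin(\theta)H_{i}$ leaving $\mathcal{F}$ invariant, and exhaustiveness comes from transitivity plus the generating-sequence decomposition $g = h_{d}^{b_{d}}\cdots h_{1}^{b_{1}}$ with the choice $\bm{\theta} = \pi\bm{b}/2$, which reaches $\ket{\bm{y}}$ up to the global phase $\lambda(-i)^{\abs{\bm{b}}}$.
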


\begin{proof}
    Since $h_{i}$ is an involution, so is its permutation representation $\hat{h}_{i} \in \lo(\mathcal{F})$.
    Therefore, each $\hat{h}_{i}$ can be extended to a involutory unitary $H_{i} \in \lo(\qubit^{\otimes m})$---e.g., canonically by $H_{i}\vert_{\mathcal{F}^{\perp}} \coloneqq \one$.
    Now, fix $\bm{x} \in F$ and let $\bm{y} \in F$ be arbitrary.
    Since $G$ acts transitively on $F$, there exists some $g \in G$ so that $g \cdot \bm{x} = \bm{y}$.
    By the definition of a generating sequence, there exists $\bm{b} \in \bit^{d}$ so that $g = h_{d}^{b_{d}} \cdots h_{2}^{b_{2}} h_{1}^{b_{1}}$.
    As shown in \eqref{equation:ExponentialOfInvolution}, $H_{i}$ being involutory implies that $e^{-i \theta H_{i}} = \cos(\theta) \one - i \sin(\theta) H_{i}$ for all $\theta \in \R$.
    Especially, it holds that $e^{-i 0 H_{i}} = \one$ and $e^{-i \pi H_{i} / 2} = -i H_{i}$.
    Let $\bm{\theta}_{\bm{y}} \coloneqq \pi \bm{b} / 2$, then
    \begin{align*}
        V(\bm{\theta}_{\bm{y}}) \ket{\bm{0}} &= (-i H_{d})^{b_{d}} \cdots (-i H_{2})^{b_{2}} (-i H_{1})^{b_{1}} W \ket{\bm{0}} \\
        &= e^{i \xi} (-i)^{\abs{\bm{b}}} H_{d}^{b_{d}} \cdots H_{2}^{b_{2}} H_{1}^{b_{1}} \ket{\bm{x}} = e^{i \xi} (-i)^{\abs{\bm{b}}} \ket{\bm{y}}.
    \end{align*}
    That is, $\abs{\expval{\bm{y} | V(\bm{\theta}_{\bm{y}}) | \bm{0}}} = \abs{e^{i \xi} (-i)^{\abs{\bm{b}}} \expval{\bm{y} | \bm{y}}} = 1$.
    Furthermore, since $\one$ and all $H_{i}$, $i \in [d]$, leave $\mathcal{F}$ invariant, and $W \ket{\bm{0}} \in \mathcal{F}$, $V(\bm{\theta})$ is readily feasibility-respecting.
\end{proof}

\autoref{theorem:MainResult} establishes an abstract pipeline for obtaining exhaustively parametrised, feasibility-preserving quantum circuits for problems with a group acting transitively on their feasible subset and for which a generating sequence of involutions is known.
This pipeline is also schematically depicted in \autoref{figure:AbstractPipeline}.
Importantly, it only ensures exact reachability and not that the parameters realising any optimal solution can be found efficiently.

Before deriving concrete examples in \autoref{section:Results}, let us first make the general case more practical.
Suitable unitaries $W$ that prepare the initial feasible state $\ket{\bm{x}}$ can be simple Pauli-$\x$ layers, applying a bit flip to the $i$-th qubit whenever $x_{i} = 1$.
Furthermore, since $e^{-i \theta H_{i}}$ is $2 \pi$-periodic in $\theta$, and $e^{-i (\theta + \pi) H_{i}} = e^{-i \theta H_{i}} = - e^{i \theta H_{i}}$, the parameter range for $\theta$ can be effectively restricted to $[0, \pi)$.
Moreover, note that the pipeline does, unlike the QAOA, not feature the concept of a phase separator at all, but solely focusses on the feasibility structure.
This makes it simpler to transfer a construction between problems with similar feasibility features, but different objective functions, and protects against more complicated objective functions incurred by a denser encoding, as presented for the \tsp{}.
Lastly, recall the standard circuit construction for implementing an exponentiated involutory unitary $U \in \lo(\qubit^{\otimes m})$ using one ancilla qubit.
The circuit is depicted in \autoref{figure:ImplementationOfExponentiatedInvolution}.
Concretely, if $U$ is not only unitary, but also an involution, then
\begin{align*}
    e^{-i \theta U} = (\bra{\bm{0}} \h \otimes \one) \controlled(U) (\rx(2 \theta) \otimes \one) \controlled(U) (\h \ket{\bm{0}} \otimes \one)
\end{align*}
That is, the parametrised exponential can be implemented with an additional $\ket{0}$-initialised ancilla qubit and with two Hadamard gates, a parametrised rotational \texttt{X}-gate and two controlled versions of $U$.

\section{\label{section:Results}Results}

We present now two concrete constructions of exhaustively parametrised circuits for the \tsp{}.
Both will follow the abstract pipeline established by \autoref{theorem:MainResult} and the subsequent comments.
Consider an $n$-city instance in either the twofold one-hot encoding or the more compact $n \log(n)$-encoding.
For simplicity, we stick to the non-reduced case with $n!$ (instead of $(n - 1)!$) feasible solutions, but the reduced case follows the same logic.
The feasible solutions directly correspond to the elements $S_{n}$, e.g., via the identification of $n^{2}$-bit strings with $n \times n$ permutation matrices or the interpretation of $n \log(n)$-bit strings as one-line notations of permutations in binary.
Symbolically, any of these identifications establishes a bijection $\epsilon \colon S_{n} \rightarrow F$, where $F$ is the set of feasible bit strings.
The (sharp) transitive group action of $S_{n}$ on itself via left multiplication thereby extends to a (sharp) transitive action on $F$ via $\sigma \cdot \bm{x} \coloneqq \epsilon(\sigma \epsilon^{-1}(\bm{x}))$, where $\bm{x} \in F$ and $\sigma \in S_{n}$.

As the initial feasible state we may, without loss of generality, choose the bit string representing the identity permutation.
In the $n^{2}$-encoding, it is given by $x_{i, i} = 1$ and $x_{i, j} = 0$ for all $i, j \in [n]$ with $i \neq j$.
Preparing this state from $\ket{\bm{0}}$ thus requires $n$ Pauli-$\x$ gates.
In the $n \log(n)$-encoding, the bit string is given by $\bm{x}_{i} = \text{bin}(i - 1)$ for all $i \in [n]$.
This requires the application of $\approx n \log(n) / 2$ Pauli-$\x$ gates to $\ket{\bm{0}}$, since the average Hamming weight of the first $n$ integers' binary representations is $\log(n) / 2$.

\subsection{\label{subsection:BubbleSortMixer}Bubble sort sequence}

The given transitive group action on the feasible set of the \tsp{} is somewhat natural.
However, there is no canonical candidate for a generating sequence of involutory elements.
Our first example is derived from the (optimised) bubble sort algorithm~\cite{Friend1956SortingOnElectronicComputerSystems}.
Bubble sort iterates through an array of length $n$ element by element, comparing neighbouring values and swapping them if they are in the wrong oder.
It repeats this procedure, skipping comparison with the last element since it is already it is necessarily already at the correct position.
That is, the second iteration effectively operates on an array of length $n - 1$.
In each iteration, the effective length reduces by one, until it reaches $0$, meaning that the entire array is sorted.
In summary, the $i$-th iteration consists of $n - i$ adjacency comparisons and subsequent conditional swaps.
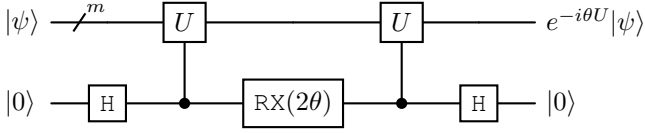
\begin{figure}[!ht]
\begin{center}
\begin{quantikz}
        \lstick{$\ket{\psi}$}	&  \qwbundle{m} & \gate[1]{U}   &               		& \gate[1]{U}   &               & \rstick{$e^{-i \theta U}\hspace*{-2pt} \ket{\psi}$} \\
        \lstick{$\ket{0}$\,}    & \gate[1]{\h}  & \ctrl{-1}     & \gate[1]{\rx(2 \theta)}       & \ctrl{-1}     & \gate[1]{\h}  & \rstick{$\ket{0}$}
\end{quantikz}
\caption{\label{figure:ImplementationOfExponentiatedInvolution}
        Implementation of $e^{-i \theta U}$, where $U$ is a Hermitian unitary.
}
\end{center}
\end{figure}
On an abstract level, and by reorganising the iterations $i \mapsto n - i$, bubble sort is able to apply all the permutations
\begin{align}\label{equation:BubbleSortSequence}
    \prod_{i = 1}^{n - 1} \left(\tau_{i}^{b_{i, i}} \circ \cdots \circ \tau_{1}^{b_{i, 1}}\right),
\end{align}
where $\tau_{i} \coloneqq \tau_{i, i + 1}$ is an adjacency transposition, and the bits $b_{i, j}$, $i \in [n - 1]$ and $j \in [i]$, determines whether two neighbouring elements $j$ and $j + 1$ are swapped in the $i$-th iteration ($b_{i, j} = 1$) or not ($b_{i, j} = 0$).
That bubble sort can sort every possible input array---or, equivalently, can invert every possible permutation of an initially sorted array---corresponds exactly to the fact that the \emph{bubble sort sequence} of adjacency transpositions in the order given by \eqref{equation:BubbleSortSequence} is a generating sequence for $S_{n}$.
We denote the bubble sort sequence for $S_{n}$ by $\Delta_{n}$ and observe that
\begin{align}\label{equation:BubbleSortSequenceLength}
    \abs{\Delta_{n}} = \tfrac{n (n - 1)}{2} \in \bigo(n^{2}).
\end{align}
Its elements are involutions so that \autoref{theorem:MainResult} is applicable.
Furthermore, by fixing a starting city, we can simply exchange $n$ with $n - 1$ in the above derivation.

Let us further discuss the representation of $\tau_{i}$, $i \in [n]$, as $m$-qubit operator, where $m \in \{n^{2}, n \log(n)\}$.
In both cases, $\tau_{i}$ also acts naturally on $\bit^{m}$ (and not only the feasible subset) by swapping the $i$-th and $(i + 1)$-th substrings.\footnote{
    To be more precise, swapping subregisters in the $n \log(n)$-encoding corresponds to a right action.
    For the reachability argument, this is not important, as also right self-actions of groups are transitive.
}
Defining $r \coloneqq m / n \in \{n, \log(n)\}$, the $i$-th substring consists of the $r$ bit positions $(i - 1) r + 1, \ldots, i r$ .
That is, $\tau_{i}$ acts as $r$ individual, disjoint transpositions of bit positions.
Each of these corresponds to a $\swp$-gate on the $m$-qubit space, yielding the representation of $\tau_{i}$ as
\begin{align}\label{equation:TranspositionRepresentation}
    \hat{\tau}_{i} = \prod_{k = 1}^{r} \swp_{(i - 1) r + k}^{i r + k}.
\end{align}

By indexing the qubits in ``column-major'' format, i.e., by distributing the $i$-th subregister over the qubits $(k - 1) n + i$, $k \in [r]$, all $\swp$-gates only act on nearest neighbours in an $m$-qubit chain.
However, implementing the parametrised exponentials $e^{-i \theta \hat{\tau}_{i}}$ via the circuit construction shown in \autoref{figure:ImplementationOfExponentiatedInvolution} requires one additional ancilla qubit with full connectivity to all other qubits, or, alternatively, $m - 1$ ancilla qubits, each of which is connected to two subsequent data qubits.
The controlled version of $\hat{\tau}_{i}$ decomposes into the $r$-fold product of controlled swaps---i.e., Fredkin gates---where the target qubits are nearest neighbours.

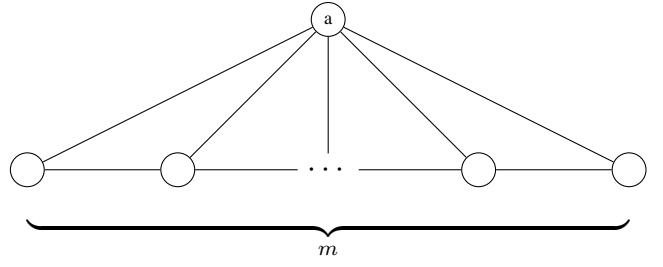
\begin{figure}[!ht]
\begin{tikzpicture}[
    every node/.style={font=\large},
    circ/.style={draw, circle, minimum size=4.5mm, inner sep=0pt}
]

% Horizontal row of 5 nodes
\node[circ] (v1) at (0,0) {};
\node[circ] (v2) at (2,0) {};
\node       (v3) at (4,0) {$\cdots$};
\node[circ] (v4) at (6,0) {};
\node[circ] (v5) at (8,0) {};

% Sixth node, vertically aligned with the middle node
\node[circ] (v6) at (4,2) {\scriptsize a};

% Horizontal edges
\draw (v1) -- (v2) -- (v3) -- (v4) -- (v5);

% Edges from the top node to all 5 lower nodes
\draw (v6) -- (v1);
\draw (v6) -- (v2);
\draw (v6) -- (v3);
\draw (v6) -- (v4);
\draw (v6) -- (v5);

% Underbrace with annotation m
\node at (4,-0.9) {$\underbrace{\hspace{8cm}}_{m}$};

\end{tikzpicture}
\caption{\label{figure:Connectivity}
    Qubit connectivity required for implementing the exhaustively parametrised quantum circuit based on the bubble sort sequence.
    The $m$ data qubits, encoding the \tsp{} solutions, can be aligned on a chain.
    The ancilla qubit, serving as control for the Fredkin gates, has to be connected to all $m$ data qubits.
    Alternatively, one can introduce $m - 1$ ancilla qubits, each connected to two contiguous data qubits.
}
\end{figure}

\begin{corollary}\label{corollary:ExhaustiveBubbleSortCircuit}
    For the $n$-city \tsp{}, there exists an exhaustively parametrised quantum circuit with $(n - 1) (n - 2) / 2$ parameters, even on star-augmented nearest-neighbour architectures (see \autoref{figure:Connectivity}).
\end{corollary}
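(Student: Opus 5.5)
The plan is to apply \autoref{theorem:MainResult} to the reduced $n$-city \tsp{}, with one city (say city $n$) fixed as the starting city. The feasible set $F$ is then in bijection with $S_{n-1}$ via $\epsilon$, in either the $(n-1)^{2}$- or the $(n-1)\log(n-1)$-encoding. As the transitive action I take the self-action of $S_{n-1}$, transported to $F$ through $\epsilon$; it may be a left or a right action depending on the encoding, as noted in the footnote. As the generating sequence of involutions I take the bubble sort sequence $\Delta_{n-1}$, i.e.\ the sequence \eqref{equation:BubbleSortSequence} with $n$ replaced by $n-1$. By \eqref{equation:BubbleSortSequenceLength} it has length $(n-1)(n-2)/2$, so the circuit $V(\bm{\theta})$ from \autoref{theorem:MainResult} has exactly that many parameters.

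The first key step is to certify that $\Delta_{n-1}$ is indeed a generating sequence. I would argue this by induction on $k$: the products $\tau_{i}^{b_{i,i}}\cdots\tau_{1}^{b_{i,1}}$ realise exactly the cycles $(j\ j{+}1\ \cdots\ i{+}1)$ for $j \in [i+1]$, i.e.\ they move any chosen element into position $i+1$. These cycles form a transversal of $S_{i}$ in $S_{i+1}$. Multiplying a transversal by a generating set of $S_{i}$ (the induction hypothesis) therefore gives all of $S_{i+1}$. This is the combinatorial content of bubble sort's correctness, and I expect it to be the only step needing genuine care. In particular, one must make sure the order of factors in \eqref{equation:BubbleSortSequence} matches left versus right cosets for the chosen action. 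If the order comes out reversed, I would instead pass to inverses: the sequence read backwards is still generating, since each factor is an involution.

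The second step is the representation. By \eqref{equation:TranspositionRepresentation}, each $\tau_{i}$ acts on all of $\bit^{m}$ as a product of $r$ disjoint $\swp$s. This operator is itself an involutory permutation unitary that restricts to the permutation representation on $\mathcal{F}$, so it serves as the required $H_{i}$. The initial state is the identity tour, prepared by a layer of Pauli-$\x$ gates $W$. \autoref{theorem:MainResult} then yields exhaustiveness, and feasibility-respecting comes for free.

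The final step is the architecture claim. Order the qubits column-major, as described above, so that every $\swp$ in $\hat{\tau}_{i}$ acts on neighbouring data qubits in a chain. Each $e^{-i\theta_{j}\hat{\tau}_{i}}$ is then implemented by the circuit of \autoref{figure:ImplementationOfExponentiatedInvolution}: controlled-$\hat{\tau}_{i}$ decomposes into Fredkin gates, each with the single ancilla as control and a nearest-neighbour pair as targets. The ancilla returns to $\ket{0}$ after each block, so it can be reused, and this needs only the star-augmented chain of \autoref{figure:Connectivity}.
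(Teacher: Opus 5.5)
Your proposal is correct and follows the paper's argument: \autoref{theorem:MainResult} with the self-action of $S_{n-1}$, the bubble sort sequence $\Delta_{n-1}$ of length $(n-1)(n-2)/2$, the subregister-$\swp$ representation, the identity tour as initial state, column-major qubit ordering, and one reusable star-connected ancilla controlling the Fredkin gates. Your coset-transversal induction only spells out what the paper takes from the correctness of bubble sort, but two small points in it need fixing. First, the block products $\tau_{i}^{b_{i,i}}\cdots\tau_{1}^{b_{i,1}}$ merely \emph{contain} the transversal $\{\tau_{i}\tau_{i-1}\cdots\tau_{j}\}_{j \in [i+1]}$ (empty product for $j = i+1$), which is a right transversal of the stabiliser of $i+1$; this is why the long pass must act first. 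Second, the coset side you need is fixed by this product order, not by the action, so your involution-reversal trick settles the left/right action issue but could not rescue a sequence whose blocks were in the wrong order.
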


\subsection{\label{subsection:BinaryInsertionSequence}Binary insertion sequence}

Among generating sequences, only containing adjacency transpositions, the bubble sort sequence is already of optimal length.\footnote{
Especially, the element $\mu \in S_{n}$ with $\mu(i) = n + 1 - i$ for all $i \in [n]$ has to be generated.
In the language of inversion numbers~\cite{Mannila1985MeasuresOfPresortednessAndOptimalSortingAlgorithms}: $\mu$ has inversion number $n (n - 1) / 2$, while $\id$ has inversion number $0$.
Multiplying a permutation with an adjacency transposition increases or decreases its inversion number by $1$.
Hence, reaching $\mu$ from $\id$ only with multiplications by adjacency transpositions requires at least $n (n - 1) / 2$ of them.}
However, we can find shorter generating sequences, containing also other involutions than adjacency transpositions.
Our construction relies on grouping together disjoint (non-adjacency) transpositions so that their product is again an involution.
The approach is similarly inductive on $n$ as the construction of the bubble sort sequence $\Delta_{n}$.
The latter arises from $\Delta_{n - 1}$ by embedding it into $S_{n}$ via the index increment $\tilde{\sigma}(i) = \sigma(i - 1) + 1$ and $\tilde{\sigma}(1) = 1$ for $\sigma \in S_{n - 1}$, and by subsequently appending the $n - 1$ adjacency transpositions $\tau_{i} \in S_{n}$, $i \in [n - 1]$.
These $n - 1$ additional elements ensure that $1$ can be mapped to every other position $j \in [n] \setminus \{1\}$.
However, by grouping together multiple transpositions in a single permutation and by sacrificing adjacency, one can achieve the same with less elements.

\begin{lemma}\label{lemma:AppendingLemma}
	Let $(\rho_{i})_{i \scriptin [t]}$ be a generating sequence for $S_{n - 1}$ and define, for $\ell \in [\log(n)]$, $\pi_{\ell} \in S_{n}$ so that $\pi_{\ell}(j) = j + 2^{\ell - 1}$ and $\pi_{\ell}(j + 2^{\ell - 1}) = j$ for all $j \in [\min\{n - 2^{\ell - 1}, 2^{\ell - 1}\}]$.
	Then, $(\tilde{\rho}_{1}, \ldots, \tilde{\rho}_{t}, \pi_{1}, \ldots, \pi_{\log(n)})$ is a generating sequence for $S_{n}$.
\end{lemma}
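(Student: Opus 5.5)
The plan is to split the target permutation into a part that fixes $1$, handled by the embedded sequence, and a part that moves $1$ to its image, handled by the $\pi_{\ell}$. A product of the new sequence with exponents $\bm{b}$ reads $\pi_{\log(n)}^{c_{\log(n)}} \cdots \pi_{1}^{c_{1}} \tilde{\rho}_{t}^{a_{t}} \cdots \tilde{\rho}_{1}^{a_{1}}$. So given $g \in S_{n}$, I will find $\bm{c}$ such that $P \coloneqq \pi_{\log(n)}^{c_{\log(n)}} \cdots \pi_{1}^{c_{1}}$ satisfies $P(1) = g(1)$. Then $P^{-1} g$ fixes $1$, and it remains to show that every permutation fixing $1$ is reached by the embedded sequence.

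\textbf{Step 1 (embedded part).} The index increment $\sigma \mapsto \tilde{\sigma}$ is an injective group homomorphism $S_{n-1} \to S_{n}$, since it is conjugation by the shift $i \mapsto i+1$ together with fixing $1$. Its image is exactly the stabiliser of $1$ in $S_{n}$. Given $P^{-1} g$ in that stabiliser, write it as $\tilde{\sigma}$ with $\sigma \in S_{n-1}$. Write $\sigma = \rho_{t}^{a_{t}} \cdots \rho_{1}^{a_{1}}$, which is possible because $(\rho_{i})_{i \scriptin [t]}$ is a generating sequence for $S_{n-1}$. The homomorphism property then gives $P^{-1} g = \tilde{\rho}_{t}^{a_{t}} \cdots \tilde{\rho}_{1}^{a_{1}}$. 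Consequently $g = P\, \tilde{\rho}_{t}^{a_{t}} \cdots \tilde{\rho}_{1}^{a_{1}}$, which has the required form.

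\textbf{Step 2 (reaching every position with the $\pi_{\ell}$).} First, each $\pi_{\ell}$ is well defined. It is a product of disjoint transpositions $(j,\, j+2^{\ell-1})$ with $j \le 2^{\ell-1}$ and $j + 2^{\ell-1} \le n$, and it fixes all other points. Let $k \coloneqq g(1) \in [n]$. Since $k - 1 \le n-1 < 2^{\log(n)}$, we can write $k - 1 = \sum_{\ell=1}^{\log(n)} c_{\ell} 2^{\ell-1}$ in binary with $c_{\ell} \in \bit$. I then claim by induction on $\ell$ that
\begin{align*}
\pi_{\ell}^{c_{\ell}} \cdots \pi_{1}^{c_{1}}(1) = 1 + \sum_{l=1}^{\ell} c_{l} 2^{l-1} \eqqcolon k_{\ell}.
\end{align*}
For the inductive step, note $k_{\ell-1} \le 2^{\ell-1}$. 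If $c_{\ell} = 0$, nothing is applied. If $c_{\ell} = 1$, then $k_{\ell-1} + 2^{\ell-1} = k_{\ell} \le k \le n$, so $k_{\ell-1} \le \min\{n - 2^{\ell-1}, 2^{\ell-1}\}$. Hence $\pi_{\ell}$ does swap $k_{\ell-1}$ with $k_{\ell-1} + 2^{\ell-1}$. At $\ell = \log(n)$ this gives $P(1) = k$.

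The main point to check carefully is this truncation condition: the definition of $\pi_{\ell}$ only swaps indices $j \le n - 2^{\ell-1}$. The argument resolves it by noting that the partial sums $k_{\ell}$ are monotone and bounded by the final target $k \le n$, so no swap ever needs to leave $[n]$. Everything else is routine bookkeeping of the application order: the $\tilde{\rho}_{i}$ act first (rightmost) and the $\pi_{\ell}$ afterwards, matching the order of the sequence in the statement.
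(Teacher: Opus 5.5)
Your proposal is correct and takes essentially the same route as the paper. Both choose the $\pi_\ell$-exponents as the binary digits of $g(1)-1$, so that $(\pi_{\log(n)}^{c_{\log(n)}}\cdots\pi_1^{c_1})^{-1}g$ fixes $1$, check the truncation condition by the same induction on $\ell$, and finish with the embedded generating sequence for $S_{n-1}$. You are slightly more careful than the paper: you state explicitly that $\sigma\mapsto\tilde\sigma$ is a homomorphism onto the stabiliser of $1$, and you use the correct bound $k_\ell \le k \le n$ on the partial sums where the paper's appendix writes an equality with $\beta(1)$.
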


\begin{proof}
	Let $\beta \in S_{n}$ be arbitrary and let $\bm{p} \in \bit^{\log(n)}$ be the binary representation of $\beta(1) - 1$.
	Furthermore, define the permutation $\bm{\pi}^{\bm{p}} \coloneqq \pi_{\log(n)}^{p_{\log(n)}} \cdots \pi_{1}^{p_{1}}$.
	We claim that
	\begin{align}\label{equation:Claim}
		\Big(\big(\bm{\pi}^{\bm{p}}\big)^{-1} \beta\Big)(1) = 1.
	\end{align}
	Provided \eqref{equation:Claim} holds, $(\rho_{i})_{i \scriptin [t]}$ being a generating sequence for $S_{n - 1}$ implies that there exists $\bm{q} \in \bit^{t}$ so that
	\begin{align*}
		\tilde{\rho}_{t}^{q_{t}} \cdots \tilde{\rho}_{1}^{q_{1}} = \big(\bm{\pi}^{\bm{p}}\big)^{-1} \beta \implies \beta = \pi_{\log(n)}^{p_{\log(n)}} \cdots \pi_{1}^{p_{1}} \tilde{\rho}_{t}^{q_{t}} \cdots \rho_{1}^{q_{1}}.
	\end{align*}
	Since $\beta \in S_{n}$ was chosen arbitrarily, this already proves the assertion.
\end{proof}

In summary, \autoref{lemma:AppendingLemma} establishes that generating sequences for $S_{n - 1}$ can be extended to generating sequence for $S_{n}$ with only $\log(n)$ additional elements.
In addition, these elements are involutions.
Starting from the empty sequence $()$, which is (by convention) generating for $S_{1} = \{\id\}$, we can now simply apply \autoref{lemma:AppendingLemma} $n - 1$ times in order to obtain a sequence $\Xi_{n}$---which we name \emph{binary insertion sequence}---for $S_{n}$ of length
\begin{align}\label{equation:TBDSequenceLength}
	\abs{\Xi_{n}} = \sum_{i = 2}^{n} \log(i) \in \bigo(n \log(n)),
\end{align}
only consisting of involutions.
Therefore, \autoref{theorem:MainResult} is again applicable.
Notably, since the $n$-city \tsp{} has $\bigo(n!)$ feasible solutions, it follows from Stirling's formula that the length of $\Xi_{n}$ is asymptotically optimal.

We again discuss the representation of the sequence elements $\pi_{\ell}$.
They are comprised by $\min\{n - 2^{\ell - 1}, 2^{\ell - 1}\}$ individual, disjoint transpositions $\tau_{j, j + 2^{\ell - 1}}$.
Analogous to the adjacency transpositions, each of those will be represented as a subregister swap, i.e., by an $r$-fold product of $\swp$-gates.
Since the transpositions $\tau_{j, j + 2^{\ell - 1}}$ act on mutually distinct elements, their fully decomposed representation consists of $r \min\{n - 2^{\ell - 1}, 2^{\ell - 1}\}$ disjoint $\swp$-gates:
\begin{align}\label{equation:InvolutionRepresentation}
    \hat{\pi}_{\ell} = \prod_{j = 1}^{\min\{n - 2^{\ell - 1}, 2^{\ell - 1}\}} \prod_{k = 1}^{r} \swp_{(j - 1) r + k}^{(j + 2^{\ell - 1} - 1) r + k}
\end{align}

Since the elements $\pi_{\ell}$ in $\Xi_{n}$ do not correspond to adjacency transpositions anymore, it requires better qubit connectivity than the one depicted in \autoref{figure:Connectivity}, which suffices to implement the bubble sort version.

\begin{corollary}\label{corollary:ExhaustiveTBDCircuit}
	For the $n$-city \tsp{}, there exists an exhaustively parametrised quantum circuit with $\log(2) + \ldots + \log(n)$ parameters.
\end{corollary}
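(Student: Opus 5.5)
The plan is to instantiate \autoref{theorem:MainResult} with the binary insertion sequence $\Xi_{n}$ and then count parameters. The feasible set $F$ (in either the $n^{2}$- or the $n \log(n)$-encoding) is in bijection with $S_{n}$ via $\epsilon$. The left self-action of $S_{n}$ transfers to a (sharp) transitive action on $F$. For the subregister-swap representation one should instead use the right action, which is also transitive, as noted in the footnote on the bubble sort construction.

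For the generating sequence, I start from the empty sequence, which generates $S_{1}$, and apply \autoref{lemma:AppendingLemma} successively for $k = 2, \ldots, n$. Each step re-embeds the previous sequence via $\rho \mapsto \tilde{\rho}$ and appends $\pi_{1}, \ldots, \pi_{\log(k)}$. This yields $\Xi_{n}$ of length $\sum_{k = 2}^{n} \log(k) = \log(2) + \ldots + \log(n)$.

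The key step I would verify carefully is the claim \eqref{equation:Claim} left open in the proof of \autoref{lemma:AppendingLemma}, namely that $\bm{\pi}^{\bm{p}}$ maps $1$ to $\beta(1)$. Since $\pi_{\ell}$ is an involution, it suffices to check that $\pi_{\log(n)}^{p_{\log(n)}} \cdots \pi_{1}^{p_{1}}(1) = \beta(1)$. I will prove by induction on $\ell$ that after applying $\pi_{1}^{p_{1}}, \ldots, \pi_{\ell}^{p_{\ell}}$ the point $1$ sits at $1 + \sum_{s \le \ell} p_{s} 2^{s-1}$, which is at most $2^{\ell}$. Each $\pi_{\ell}$ swaps $j \leftrightarrow j + 2^{\ell - 1}$ for $j \le \min\{n - 2^{\ell-1}, 2^{\ell-1}\}$. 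The current position lies in $[2^{\ell-1}]$, and if $p_{\ell} = 1$ the target $\beta(1) \le n$ guarantees the image is at most $n$. So the current position lies in the swapped range and is shifted by exactly $2^{\ell-1}$, while for $p_{\ell} = 0$ nothing happens.

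Two points need care here. First, the boundary term $\min\{n - 2^{\ell-1}, 2^{\ell-1}\}$: the prefix $1 + \sum_{s<\ell} p_{s} 2^{s-1}$ plus $2^{\ell-1}$ is at most $\beta(1) \le n$, because the later bits only add. Second, the embedded $\tilde{\rho}_{i}$ fix $1$, so they generate exactly the stabiliser of $1$, which is isomorphic to $S_{n-1}$. I expect this bookkeeping to be the only real obstacle.

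Finally, every $\pi_{\ell}$ and every embedded element is a product of disjoint transpositions, hence an involution. Its representation \eqref{equation:InvolutionRepresentation} is a product of disjoint $\swp$-gates, and therefore an involutory unitary that restricts to the permutation representation on $\mathcal{F}$. The embedding $\tilde{\cdot}$ preserves being a product of disjoint transpositions, so all elements of $\Xi_{n}$ qualify. For $W$ I take the Pauli-$\x$ layer preparing the identity permutation. \autoref{theorem:MainResult} then gives an exhaustively parametrised, feasibility-respecting circuit with one parameter per sequence element, i.e.\ $\log(2) + \ldots + \log(n)$ parameters.
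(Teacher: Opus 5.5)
Your proposal is correct and follows the paper's route. You instantiate \autoref{theorem:MainResult} with the binary insertion sequence $\Xi_{n}$, obtained by iterating \autoref{lemma:AppendingLemma} from the empty sequence, and you verify the claim \eqref{equation:Claim} by the same induction on $\ell$ as in the paper's appendix; your observation that the prefix plus $2^{\ell-1}$ is at most $\beta(1)$, rather than equal to it, is actually stated more carefully than the appendix does.
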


\subsection{\label{subsection:NumericalComparison}Numerical comparison}

After having established exact reachability guarantees for the circuits based on the bubble sort and the binary insertion sequence, we numerically study how well their parameters can be optimised for a concrete example.
We consider a 9-city \tsp{} instance whose weight matrix can be found in the \href{https://github.com/Timo59/qce26_numerics}{companion code base}.
Using the $n \log(n)$-encoding, we require $(9 - 1) \log(9 - 1) = 24$ qubits for representing the instance, effectively reducing the number of cities to $n = 8$.
We benchmark our two constructions against the QAOA with Hadfield \textit{el al.}'s \tsp{} mixers and the standard phase separator.
For our proof-of-principle, we assume all qubit and circuit components to be noiseless, all qubits to be fully connected, and work with exact expectation values rather than sample-based approximations.
We conduct the simulation using Ziegler's low-level quantum simulator Orkan~\cite{Ziegler2026OrkanCacheFriendlySimulationOfQuantumOperationsOnHermitianOperators}.

Recalling our two constructions for the \tsp{}, one can easily verify that both circuits, despite differing numbers of parameters, have the exact same circuit length.
The binary insertion sequence simply groups together multiple transpositions in one element.
One QAOA layer, in turn, contains only $n - 1$ exponentiated transpositions and the phase separator.
As a benevolent assumption, we assume the phase separator to have the same circuit length as one exponentiated transposition.
Hence $(n - 1) / 2$ QAOA layers have roughly the same circuit length as the two exhaustively parametrised circuits with their $n (n - 1) / 2$ exponentiated transpositions.
Prescribing the circuit length in the above fashion, the QAOA thus has $n - 1$ variational parameters, the binary insertion circuit has $\log(2) + \ldots + \log(n)$ parameters, and the bubble sort circuit has $n (n - 1) / 2$ parameters.
The parameter optimisation is carried out by the COBYLA optimiser~\cite{Powell1998DirectSearchAlgorithmsForOptimizationCalculations}, a pseudo gradient-based optimiser.
For all three methods, the objective Hamiltonian's expectation value serves as the objective to be optimised and recorded.
If the numerical gradient of several subsequent iterations falls short of a small, predefined threshold, we terminate the procedure.

\autoref{figure:Numerics} depicts the curves of obtained expectation values, normalised by the optimal value to yield approximation ratios.
Besides our two exhaustively parametrised circuits, we tested QAOA with two different initial states: a single feasible computational basis state and the uniform superposition of all feasible states.
For the chosen instance, both QAOA variants fail to deliver notably improved approximation ratios, while the two exhaustive constructions perform significantly better.
The bubble sort-based circuit plateaus at an approximation ratio around $0.83$, while the circuit based on binary insertion even reaches a value of roughly $0.91$.
That is, despite the exact reachability guarantee of both circuits, the optimal parameters are (unsurprisingly) not found by the (local) optimiser.
We attest the better performance of the binary insertion-based variant over the bubble sort version to the former's lower-dimensional parameter space, allowing the optimiser to navigate more efficiently.

\begin{figure}
\begin{center}
\includegraphics[width=1\linewidth]{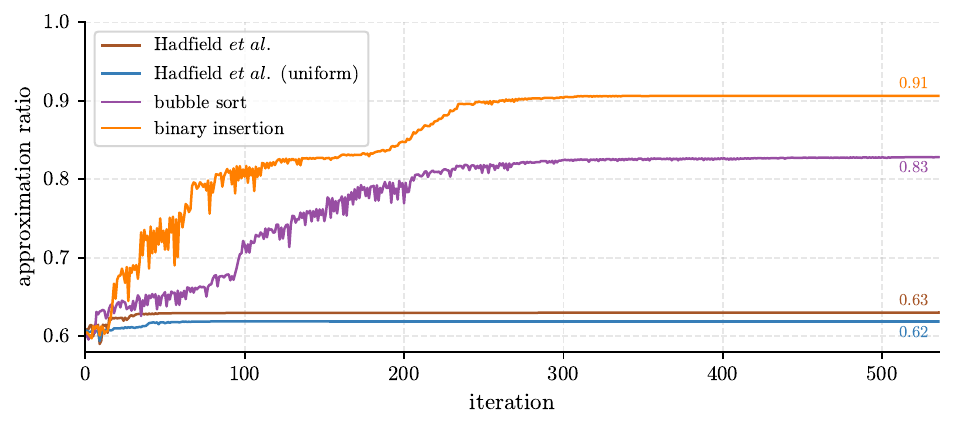}
\caption{\label{figure:Numerics}
    Approximation ratios obtained after each COBYLA iteration, using different parametrised quantum circuits on a $9$-city \tsp{} instance. 
    The QAOA with Hadfield \textit{et al.}'s mixers was not able to significantly improve initial state's approximation ratio, neither if a single computational basis state, nor the uniform superposition of all feasible states, was used.
    In contrast, the exhaustively parametrised circuit based on bubble sort was able to push the initial ratio of $\approx 0.60$ to $0.83$, while the construction based on the binary insertion sequence even reached a final approximation ratio of $0.91$.
    Not depicted here: the bubble sort-inspired approach did not terminate before 2300 steps, but was not able to significantly improve its approximation ratio during this long runtime.
    In contrast, the binary insertion variant and the QAOA initialised in a single computational basis state terminated after roughly $600$ steps.
    The QAOA with uniform initial state already terminated after roughly $150$ iterations, but failed to deliver a high-quality approximation ratio.
}
\end{center}
\end{figure}

\section{\label{section:Conclusion}Conclusion}

In this work, we have introduced the concept of exhaustively parametrised, feasibility-respecting quantum circuits, which, by design, are able to exactly reach all feasible solutions of a given optimisation problem with finitely many parameters, including optimal states, while avoiding infeasible states entirely.
These non-asymptotic reachability guarantees are unmatched by generic QAOA circuits.
We have further established an abstract pipeline for promoting a transitive group action on a problem's feasible set to such an exhaustively parametrised, feasibility-respecting circuit.
The construction is centred around the simple, yet powerful group-theoretical concept of generating sequences.
We showed that generating sequences yield natural candidates for highly expressive quantum circuits.
Additionally demanding the sequence elements to be involutions allows us to continuously parametrised these circuits, reproducing exact application or omission of individual sequence elements.

For the $n$-city Travelling Salesperson Problem (TSP), whose feasible set can be naturally identified with the symmetric group $S_{n}$, we have introduced two concrete classes of exhaustively parametrised, feasibility-respecting circuits.
Both are based on the (sharply) transitive group action of $S_{n}$ on itself, and only differ in the chosen generating sequence.
The first example consists of $\bigo(n^{2})$ adjacency transpositions and is inspired by the bubble sort algorithm.
The second example is more involved, but only contains $\bigo(n \log(n))$ elements, drastically reducing the number of parameters necessary for exact reachability.

As a proof-of-principle, we have tested both constructions against the QAOA with established TSP-mixers on a $9$-city instance.
For our concrete example, we observed improved performance of both exhaustively parameterised circuits over a $7$-layer QAOA circuit.
Moreover, the circuit with $\bigo(n \log(n))$ parameters performed notably better than the bubble sort-based circuit with its $\bigo(n^{2})$ parameters, which we attribute to the classical solver's ability to navigate better in lower-dimensional parameter spaces.
However, none of the two methods was able---despite the theoretical reachability guarantee---to recover the optimum exactly.
This underscores once more that reachability does not imply efficient trainability.
Ultimately, our proof-of-principle does not substitute for extensive benchmarking, and the investigating the implications of exact reachability to the selection and adaptation of the classical solver remains an interesting open research question.

We highlight that the pipeline utilised to find concrete circuits for the TSP is general enough to be applicable to other problems as well.
Abstractly, any group acting transitively on a problem's feasible set and has a generating sequence of involutions can be promoted to an exhaustively parametrised, feasibility-respecting quantum circuit.
A promising research direction will be to identify such group actions for other, more challenging combinatorial optimisation problems and to apply the pipeline to those problems.
Here, versions of facility location problems and vehicle routing seem to have richer symmetries in their feasible sets than packing problems.
Possible applications of our framework also arise when a group action only covers part of the feasible set.
In such cases, the pipeline might be enhanceable by additional, conditional operations to complete coverage of the entire feasible set.

\appendix

\begin{proof}[Proof of \eqref{equation:Claim}]
	For $\ell \in [\log(n)]$, define $\bm{\pi}^{\bm{p}}_{r} \coloneqq \pi_{\ell}^{p_{\ell}} \cdots \pi_{1}^{p_{1}}$ so that $\bm{\pi}^{\bm{p}}_{\log(n)} = \bm{\pi}^{\bm{p}}$.
	We prove that
	\begin{align*}
		\bm{\pi}^{\bm{p}}_{\ell}(1) = \sum_{i = 1}^{\ell} p_{i} 2^{i - 1} + 1
	\end{align*}
	for all $\ell \in [\log(n)]$ via (finite) induction on $\ell$.
	For $\ell = \log(n)$, we then obtain that
	\begin{align*}
		\bm{\pi}^{\bm{p}}(1) = \sum_{i = 1}^{\log(n)} p_{i} 2^{i - 1} + 1 = \beta(1) - 1 + 1 = \beta(1),
	\end{align*}
	since $\bm{p}$ is the binary representation of $\beta(1) - 1$.
	This identity already establishes \eqref{equation:Claim} via multiplication with $(\bm{\pi}^{\bm{p}})^{-1}$ from the left.

	\textit{Base case}.
	Let $\ell = 1$.
	Then it is $\bm{\pi}^{\bm{p}}_{\ell}(1) = \pi_{1}^{p_{1}}(1)$.
	Since $\min\{n - 2^{\ell - 1}, 2^{\ell - 1}\} = \min\{n - 1, 1\} = 1$ for all $n \geq 2$, it is $\pi_{1}(1) = 2$, $\pi_{1}(2) = 1$, and $\pi_{1}(j) = j$ for all $j \geq 3$.
	Hence,
	\begin{align*}
		\pi_{1}^{p_{1}}(1) = 1 + p_{1} 1 = p_{1} 2^{0} + 1 = \sum_{i = 1}^{1} p_{i} 2^{i - 1} + 1.
	\end{align*}

	\textit{Induction step}.
	Let $\ell > 1$ and assume that
	\begin{align*}
		\bm{\pi}^{\bm{p}}_{\ell - 1}(1) = \sum_{i = 1}^{\ell - 1} p_{i} 2^{i - 1} + 1
	\end{align*}
	holds.
	If $p_{\ell} = 0$, then the claim is trivially true also for $\ell$.
	Therefore, consider only the case of $p_{\ell} = 1$.
	It is
	\begin{align*}
		\bm{\pi}^{\bm{p}}_{\ell}(1) = \big(\pi_{\ell}^{p_{\ell}} \bm{\pi}^{\bm{p}}_{\ell - 1}\big)(1) = \pi_{\ell}\left(\sum_{i = 1}^{\ell - 1} p_{i} 2^{i - 1} + 1\right) \eqqcolon j_{0},
	\end{align*}
	where we have already used the induction hypothesis.
	Observe that $j_{0} \leq 2^{\ell - 1}$ (with equality only if $p_{i} = 1$ for all $i \in [\ell - 1]$).
	Furthermore, it is
	\begin{align*}
		j_{0} + 2^{\ell - 1} = j_{0} + p_{\ell} 2^{\ell - 1} = \sum_{i = 1}^{\ell} p_{i} 2^{i - 1} + 1 = \beta(1) \leq n.
	\end{align*}
	In summary, $j_{0} \leq \min\{n - 2^{\ell - 1}, 2^{\ell - 1}\}$, so that $\pi_{\ell}$ acts non-trivially on $j_{0}$ by incrementing it by $2^{\ell - 1} = p_{\ell} 2^{\ell - 1}$, yielding the correct result.
\end{proof}

\section*{Acknowledgment}

We thank Tim Heine, Tobias J. Osborne, Debora Ramacciotti, and Sören Wilkening for helpful discussions.

\noindent\textbf{Data and code availability statement.}
The depicted data can be found at \url{https://github.com/Timo59/qce26_numerics}.

\IEEEtriggeratref{19}
\bibliographystyle{IEEEtran}
\bibliography{IEEEabrv,bibliography}

% Generated by IEEEtran.bst, version: 1.12 (2007/01/11)
\begin{thebibliography}{10}
\providecommand{\url}[1]{#1}
\csname url@samestyle\endcsname
\providecommand{\newblock}{\relax}
\providecommand{\bibinfo}[2]{#2}
\providecommand{\BIBentrySTDinterwordspacing}{\spaceskip=0pt\relax}
\providecommand{\BIBentryALTinterwordstretchfactor}{4}
\providecommand{\BIBentryALTinterwordspacing}{\spaceskip=\fontdimen2\font plus
\BIBentryALTinterwordstretchfactor\fontdimen3\font minus
  \fontdimen4\font\relax}
\providecommand{\BIBforeignlanguage}[2]{{%
\expandafter\ifx\csname l@#1\endcsname\relax
\typeout{** WARNING: IEEEtran.bst: No hyphenation pattern has been}%
\typeout{** loaded for the language `#1'. Using the pattern for}%
\typeout{** the default language instead.}%
\else
\language=\csname l@#1\endcsname
\fi
#2}}
\providecommand{\BIBdecl}{\relax}
\BIBdecl

\bibitem{Farhi2014AQuantumApproximateOptimizationAlgorithm}
E.~Farhi, J.~Goldstone, and S.~Gutmann, ``{A Quantum Approximate Optimization
  Algorithm},'' 2014, [arXiv preprint
  \href{https://arxiv.org/abs/1411.4028}{arXiv:1411.4028}].

\bibitem{Hadfield2019FromTheQuantumApproximateOptimizationAlgorithmToAQuantumAlternatingOperatorAnsatz}
\BIBentryALTinterwordspacing
S.~Hadfield, Z.~Wang, B.~O’Gorman, E.~G. Rieffel, D.~Venturelli
  \emph{et~al.}, ``{From the Quantum Approximate Optimization Algorithm to a
  Quantum Alternating Operator Ansatz},'' \emph{Algorithms}, vol.~12, no.~2,
  p.~34, 2019. [Online]. Available: \url{https://doi.org/10.3390/a12020034}
\BIBentrySTDinterwordspacing

\bibitem{Powell1998DirectSearchAlgorithmsForOptimizationCalculations}
\BIBentryALTinterwordspacing
M.~J.~D. Powell, ``{Direct search algorithms for optimization calculations},''
  \emph{Acta Numer.}, vol.~7, pp. 287--336, 1998. [Online]. Available:
  \url{https://doi.org/10.1017/s0962492900002841}
\BIBentrySTDinterwordspacing

\bibitem{Kraft1994Algorithm733TompFortranModulesForOptimalControlCalculations}
\BIBentryALTinterwordspacing
D.~Kraft, ``{Algorithm 733: TOMP–Fortran modules for optimal control
  calculations},'' \emph{ACM Trans. Math. Softw.}, vol.~20, no.~3, pp.
  262--281, 1994. [Online]. Available:
  \url{https://doi.org/10.1145/192115.192124}
\BIBentrySTDinterwordspacing

\bibitem{Agirre2025AMonteCarloTreeSearchApproachToQaoaFindingANeedleInTheHaystack}
\BIBentryALTinterwordspacing
A.~Agirre, E.~van Nieuwenburg, and M.~M. Wauters, ``{A Monte Carlo Tree Search
  approach to QAOA: finding a needle in the haystack},'' \emph{New J. Phys.},
  vol.~27, no.~4, p. 043014, 2025. [Online]. Available:
  \url{https://doi.org/10.1088/1367-2630/adc765}
\BIBentrySTDinterwordspacing

\bibitem{Wurtz2021FixedAngleConjecturesForTheQuantumApproximateOptimizationAlgorithmOnRegularMaxcutGraphs}
\BIBentryALTinterwordspacing
J.~Wurtz and D.~Lykov, ``{Fixed-angle conjectures for the quantum approximate
  optimization algorithm on regular MaxCut graphs},'' \emph{Phys. Rev.}, vol.
  104, no.~5, 2021. [Online]. Available:
  \url{https://doi.org/10.1103/physreva.104.052419}
\BIBentrySTDinterwordspacing

\bibitem{Farhi2000QuantumComputationByAdiabaticEvolution}
E.~Farhi, J.~Goldstone, S.~Gutmann, and M.~Sipser, ``{Quantum Computation by
  Adiabatic Evolution},'' 2000, [arXiv preprint
  \href{https://arxiv.org/abs/quant-ph/0001106}{arXiv:quant-ph/0001106}].

\bibitem{Binkowski2024ElementaryProofOfQaoaConvergence}
\BIBentryALTinterwordspacing
L.~Binkowski, G.~Ko{\ss}mann, T.~Ziegler, and R.~Schwonnek, ``{Elementary proof
  of QAOA convergence},'' \emph{New J. Phys.}, vol.~26, no.~7, p. 073001, 2024.
  [Online]. Available: \url{https://doi.org/10.1088/1367-2630/ad59bb}
\BIBentrySTDinterwordspacing

\bibitem{Applegate2009CertificationOfAnOptimalTspTourThrough85900Cities}
\BIBentryALTinterwordspacing
D.~L. Applegate, R.~E. Bixby, V.~Chv{\'a}tal, W.~Cook, D.~G. Espinoza
  \emph{et~al.}, ``{Certification of an optimal TSP tour through 85,900
  cities},'' \emph{Oper. Res. Lett.}, vol.~37, no.~1, pp. 11--15, 2009.
  [Online]. Available: \url{https://doi.org/10.1016/j.orl.2008.09.006}
\BIBentrySTDinterwordspacing

\bibitem{Dantzig1954SolutionOfALargeScaleTravelingSalesmanProblem}
\BIBentryALTinterwordspacing
G.~Dantzig, R.~Fulkerson, and S.~Johnson, ``{Solution of a Large-Scale
  Traveling-Salesman Problem},'' \emph{J. Oper. Res. Soc. Am.}, vol.~2, no.~4,
  pp. 393--410, 1954. [Online]. Available:
  \url{https://doi.org/10.1287/opre.2.4.393}
\BIBentrySTDinterwordspacing

\bibitem{Marchand2002CuttingPlanesInIntegerAndMixedIntegerProgramming}
\BIBentryALTinterwordspacing
H.~Marchand, A.~Martin, R.~Weismantel, and L.~Wolsey, ``{Cutting planes in
  integer and mixed integer programming},'' \emph{Discret. Appl. Math.}, vol.
  123, no. 1-3, pp. 397--446, 2002. [Online]. Available:
  \url{https://doi.org/10.1016/s0166-218x(01)00348-1}
\BIBentrySTDinterwordspacing

\bibitem{Land1960AnAutomaticMethodOfSolvingDiscreteProgrammingProblems}
\BIBentryALTinterwordspacing
A.~H. Land and A.~G. Doig, ``{An Automatic Method of Solving Discrete
  Programming Problems},'' \emph{Econometrica}, vol.~28, no.~3, p. 497, 1960.
  [Online]. Available: \url{https://doi.org/10.2307/1910129}
\BIBentrySTDinterwordspacing

\bibitem{Bellman1962DynamicProgrammingTreatmentOfTheTravellingSalesmanProblem}
\BIBentryALTinterwordspacing
R.~Bellman, ``{Dynamic Programming Treatment of the Travelling Salesman
  Problem},'' \emph{J. ACM}, vol.~9, no.~1, pp. 61--63, 1962. [Online].
  Available: \url{https://doi.org/10.1145/321105.321111}
\BIBentrySTDinterwordspacing

\bibitem{Held1962ADynamicProgrammingApproachToSequencingProblems}
\BIBentryALTinterwordspacing
M.~Held and R.~M. Karp, ``{A Dynamic Programming Approach to Sequencing
  Problems},'' \emph{J. Soc. Ind. Appl. Math.}, vol.~10, no.~1, pp. 196--210,
  1962. [Online]. Available: \url{https://doi.org/10.1137/0110015}
\BIBentrySTDinterwordspacing

\bibitem{Kochenberger2014TheUnconstrainedBinaryQuadraticProgrammingProblemASurvey}
\BIBentryALTinterwordspacing
G.~Kochenberger, J.~Hao, F.~Glover, M.~Lewis, Z.~L{\"u} \emph{et~al.}, ``{The
  unconstrained binary quadratic programming problem: a survey},'' \emph{J.
  Comb. Optim.}, vol.~28, no.~1, pp. 58--81, 2014. [Online]. Available:
  \url{https://doi.org/10.1007/s10878-014-9734-0}
\BIBentrySTDinterwordspacing

\bibitem{Hopfield1985NeuralComputationOfDecisionsInOptimizationProblems}
\BIBentryALTinterwordspacing
J.~J. Hopfield and D.~W. Tank, ``{“Neural” computation of decisions in
  optimization problems},'' \emph{Biol. Cybern.}, vol.~52, no.~3, pp. 141--152,
  1985. [Online]. Available: \url{https://doi.org/10.1007/bf00339943}
\BIBentrySTDinterwordspacing

\bibitem{Lucas2014IsingFormulationsOfManyNpProblems}
\BIBentryALTinterwordspacing
A.~Lucas, ``{Ising formulations of many NP problems},'' \emph{Front. Phys.},
  vol.~2, p.~5, 2014. [Online]. Available:
  \url{https://doi.org/10.3389/fphy.2014.00005}
\BIBentrySTDinterwordspacing

\bibitem{Warren2012AdaptingTheTravelingSalesmanProblemToAnAdiabaticQuantumComputer}
\BIBentryALTinterwordspacing
R.~H. Warren, ``{Adapting the traveling salesman problem to an adiabatic
  quantum computer},'' \emph{Quantum Inf. Process.}, vol.~12, no.~4, pp.
  1781--1785, 2012. [Online]. Available:
  \url{https://doi.org/10.1007/s11128-012-0490-8}
\BIBentrySTDinterwordspacing

\bibitem{Glos2020SpaceEfficientBinaryOptimizationForVariationalComputing}
A.~Glos, A.~Krawiec, and Z.~Zimbor{\'a}s, ``{Space-efficient binary
  optimization for variational computing},'' 2020, [arXiv preprint
  \href{https://arxiv.org/abs/2009.07309}{arXiv:2009.07309}].

\bibitem{SmithMiles2025TheTravellingSalespersonProblemAndTheChallengesOfNearTermQuantumAdvantage}
\BIBentryALTinterwordspacing
K.~A. Smith-Miles, H.~H. Hoos, H.~Wang, T.~B{\"a}ck, and T.~J. Osborne, ``{The
  travelling salesperson problem and the challenges of near-term quantum
  advantage},'' \emph{Quantum Sci. Technol.}, vol.~10, no.~3, p. 033001, 2025.
  [Online]. Available: \url{https://doi.org/10.1088/2058-9565/add61d}
\BIBentrySTDinterwordspacing

\bibitem{Ruan2020TheQuantumApproximateAlgorithmForSolvingTravelingSalesmanProblem}
\BIBentryALTinterwordspacing
Y.~Ruan, S.~Marsh, X.~Xue, Z.~Liu, and J.~Wang, ``{The Quantum Approximate
  Algorithm for Solving Traveling Salesman Problem},'' \emph{Comput. Mater.
  \&amp; Contin.}, vol.~63, no.~3, pp. 1237--1247, 2020. [Online]. Available:
  \url{https://doi.org/10.32604/cmc.2020.010001}
\BIBentrySTDinterwordspacing

\bibitem{Binkowski2025SymmetryBasedQuantumAlgorithmsForOpenShopSchedulingWithHardConstraints}
\BIBentryALTinterwordspacing
L.~Binkowski, G.~Ko{\ss}mann, C.~Tutschku, and R.~Schwonnek, ``{Symmetry-based
  quantum algorithms for open-shop scheduling with hard constraints},''
  \emph{Acad. Quantum}, vol.~2, no.~3, 2025. [Online]. Available:
  \url{https://doi.org/10.20935/acadquant7900}
\BIBentrySTDinterwordspacing

\bibitem{Shaydulin2021ClassicalSymmetriesAndTheQuantumApproximateOptimizationAlgorithm}
\BIBentryALTinterwordspacing
R.~Shaydulin, S.~Hadfield, T.~Hogg, and I.~Safro, ``{Classical symmetries and
  the Quantum Approximate Optimization Algorithm},'' \emph{Quantum Inf.
  Process.}, vol.~20, no.~11, 2021. [Online]. Available:
  \url{https://doi.org/10.1007/s11128-021-03298-4}
\BIBentrySTDinterwordspacing

\bibitem{Martonak2004QuantumAnnealingOfTheTravelingSalesmanProblem}
\BIBentryALTinterwordspacing
R.~Marto{\v{n}}{\'a}k, G.~E. Santoro, and E.~Tosatti, ``{Quantum annealing of
  the traveling-salesman problem},'' \emph{Phys. Rev.}, vol.~70, no.~5, 2004.
  [Online]. Available: \url{https://doi.org/10.1103/physreve.70.057701}
\BIBentrySTDinterwordspacing

\bibitem{Warren2019SolvingTheTravelingSalesmanProblemOnAQuantumAnnealer}
\BIBentryALTinterwordspacing
R.~H. Warren, ``{Solving the traveling salesman problem on a quantum
  annealer},'' \emph{SN Appl. Sci.}, vol.~2, no.~1, 2019. [Online]. Available:
  \url{https://doi.org/10.1007/s42452-019-1829-x}
\BIBentrySTDinterwordspacing

\bibitem{Chen2006OptimizedAnnealingOfTravelingSalesmanProblemFromTheNthNearestNeighborDistribution}
\BIBentryALTinterwordspacing
Y.~Chen and P.~Zhang, ``{Optimized annealing of traveling salesman problem from
  the nth-nearest-neighbor distribution},'' \emph{Phys. A: Stat. Mech. its
  Appl.}, vol. 371, no.~2, pp. 627--632, 2006. [Online]. Available:
  \url{https://doi.org/10.1016/j.physa.2006.04.052}
\BIBentrySTDinterwordspacing

\bibitem{Goswami2025SolvingTheTravellingSalesmanProblemUsingBlochSphereEncoding}
\BIBentryALTinterwordspacing
K.~Goswami, G.~Anekonda~Veereshi, P.~Schmelcher, and R.~Mukherjee, ``{Solving
  the travelling salesman problem using Bloch sphere encoding},'' \emph{Quantum
  Sci. Technol.}, vol.~11, no.~1, p. 015007, 2025. [Online]. Available:
  \url{https://doi.org/10.1088/2058-9565/ae1e9a}
\BIBentrySTDinterwordspacing

\bibitem{Grover1996AFastQuantumMechanicalAlgorithmForDatabaseSearch}
L.~K. Grover, ``{A fast quantum mechanical algorithm for database search},''
  1996, [arXiv preprint
  \href{https://arxiv.org/abs/quant-ph/9605043}{arXiv:quant-ph/9605043}].

\bibitem{Sato2025TwoStepQuantumSearchAlgorithmForSolvingTravelingSalesmanProblems}
\BIBentryALTinterwordspacing
R.~Sato, C.~Gordon, K.~Saito, H.~Kawashima, T.~Nikuni \emph{et~al.},
  ``{Two-Step Quantum Search Algorithm for Solving Traveling Salesman
  Problems},'' \emph{IEEE Trans. Quantum Eng.}, vol.~6, pp. 1--12, 2025.
  [Online]. Available: \url{https://doi.org/10.1109/tqe.2025.3548706}
\BIBentrySTDinterwordspacing

\bibitem{Bang2012AQuantumHeuristicAlgorithmForTheTravelingSalesmanProblem}
\BIBentryALTinterwordspacing
J.~Bang, J.~Ryu, C.~Lee, S.~Yoo, J.~Lim \emph{et~al.}, ``{A quantum heuristic
  algorithm for the traveling salesman problem},'' \emph{J. Korean Phys. Soc.},
  vol.~61, no.~12, pp. 1944--1949, 2012. [Online]. Available:
  \url{https://doi.org/10.3938/jkps.61.1944}
\BIBentrySTDinterwordspacing

\bibitem{Bartschi2020GroverMixersForQaoaShiftingComplexityFromMixerDesignToStatePreparation}
\BIBentryALTinterwordspacing
A.~Bartschi and S.~Eidenbenz, ``{Grover Mixers for QAOA: Shifting Complexity
  from Mixer Design to State Preparation},'' in \emph{2020 IEEE International
  Conference on Quantum Computing and Engineering (QCE)}, 2020, pp. 72--82.
  [Online]. Available: \url{https://doi.org/10.1109/qce49297.2020.00020}
\BIBentrySTDinterwordspacing

\bibitem{Binkowski2025QuantumFisherYatesShuffleUnifyingMethodsForGeneratingUniformSuperpositionsOfPermutations}
L.~Binkowski and M.~Schwiering, ``{Quantum Fisher-Yates shuffle: Unifying
  methods for generating uniform superpositions of permutations},'' 2025,
  [arXiv preprint \href{https://arxiv.org/abs/2504.17965}{arXiv:2504.17965}].

\bibitem{Hess2026GroverAdaptiveSearchWithProblemSpecificStatePreparation}
M.~Hess, L.~Palackal, A.~Awasthi, P.~J. Eder, M.~Schnaus \emph{et~al.},
  ``{Grover Adaptive Search with Problem-Specific State Preparation},'' 2026,
  [arXiv preprint \href{https://arxiv.org/abs/2602.08418}{arXiv:2602.08418}].

\bibitem{Binkowski2026QuantumAlgorithmsForCombinatorialOptimisation}
\BIBentryALTinterwordspacing
L.~Binkowski, ``{Quantum algorithms for combinatorial optimisation},'' 2026.
  [Online]. Available:
  \url{https://www.itp.uni-hannover.de/fileadmin/itp/qinfo/Team_Tobias_Osborne/Doctoral_Theses/Lennart_Binkowski_Doctoral_Thesis.pdf}
\BIBentrySTDinterwordspacing

\bibitem{Schwiering2024QuantumOptimizationAlgorithmsForTheTravelingSalesmanProblem}
\BIBentryALTinterwordspacing
M.~Schwiering, ``{Quantum Optimization Algorithms for the Traveling Salesman
  Problem},'' 2024. [Online]. Available:
  \url{https://www.itp.uni-hannover.de/fileadmin/itp/qinfo/Team_Tobias_Osborne/Masters_Theses/Marvin_Schwiering_Masters_Thesis.pdf}
\BIBentrySTDinterwordspacing

\bibitem{Friend1956SortingOnElectronicComputerSystems}
\BIBentryALTinterwordspacing
E.~H. Friend, ``{Sorting on Electronic Computer Systems},'' \emph{J. ACM},
  vol.~3, no.~3, pp. 134--168, 1956. [Online]. Available:
  \url{https://doi.org/10.1145/320831.320833}
\BIBentrySTDinterwordspacing

\bibitem{Mannila1985MeasuresOfPresortednessAndOptimalSortingAlgorithms}
\BIBentryALTinterwordspacing
H.~Mannila, ``{Measures of Presortedness and Optimal Sorting Algorithms},''
  \emph{IEEE Trans. Comput.}, vol. C-34, no.~4, pp. 318--325, 1985. [Online].
  Available: \url{https://doi.org/10.1109/tc.1985.5009382}
\BIBentrySTDinterwordspacing

\bibitem{Ziegler2026OrkanCacheFriendlySimulationOfQuantumOperationsOnHermitianOperators}
T.~Ziegler, ``{Orkan: Cache-friendly simulation of quantum operations on
  hermitian operators},'' 2026, [arXiv preprint
  \href{https://arxiv.org/abs/2604.15765}{arXiv:2604.15765}].

\end{thebibliography}

\end{document}